\newtheorem{problem}{Problem}
\newtheorem{conj}[theorem]{Conjecture}
\newcommand{\vals}[1]{h_{#1}}
\newcommand{\suma}[1]{\Sigma #1}
\newcommand{\set}[1]{\{#1\}}
\renewcommand{\O}{\mathcal{O}}
\newcommand{\N}{\mathbb{N}}
\newcommand{\Npos}{\mathbb{N}_+}
\newcommand{\pspace}{\textsc{PSpace}\xspace}
\newcommand{\nl}{\textsc{NL}\xspace}
\newcommand{\np}{\textsc{NP}\xspace}
\newcommand{\expspace}{\textsc{ExpSpace}\xspace}
\newcommand{\tower}{\textsc{Tower}\xspace}
\newcommand{\ackermann}{\textsc{Ackermann}\xspace}
\newcommand{\hyperackermann}{\textsc{hyperAckermann}\xspace}
\newcommand{\tran}{\longrightarrow}
\newcommand{\push}[1]{$\text{\sc push}(#1)$}
\newcommand{\pop}[1]{$\text{\sc pop}(#1)$}
\newcommand{\inc}[1]{\add{#1}{1}}
\newcommand{\dec}[1]{\sub{#1}{1}}
\newcommand{\coreadd}[2]{#1 \,\, +\!\!= \, #2}
\newcommand{\coresub}[2]{#1 \,\, -\!\!= \, #2}
\newcommand{\add}[2]{$\coreadd{#1}{#2}$}
\newcommand{\sub}[2]{$\coresub{#1}{#2}$}
\algrenewcommand{\algorithmiccomment}[1]{\qquad$\rightarrow$ #1}
\newcommand{\goto}[2]{\textbf{goto} {\footnotesize #1} \textbf{or} {\footnotesize #2}}
\newcommand{\testz}[1]{\textbf{zero?}~$#1$}
\newcommand{\vr}[1]{\sf{#1}}
\newcommand{\PROG}[4]{
\begin{minipage}{#1\linewidth}
\medskip
\underline{\bf #2:}
\label{#3}
\begin{algorithmic}[1]
#4
\end{algorithmic}
\end{minipage}
}
\newcommand{\PROGnoname}[3]{
\begin{minipage}{#1\linewidth}
\medskip
\label{#2}
\begin{algorithmic}[1]
#3
\end{algorithmic}
\end{minipage}
}
\newcommand{\prog}[1]{\mathcal{#1}}
\newcommand{\successor}[1]{\widetilde #1}
\newcommand{\FF}[1]{\mathcal{F}_{#1}}
\newcommand{\F}[1]{\mathbf{F}_{#1}}
\newcommand{\triple}[6]{\text{\sc Triple}(#1, #2, #3, #4, #5, #6)}
\newcommand{\DTIME}[1]{\text{\sc DTime}(#1)}
\newcommand{\cval}[1]{\N^{{#1}}}
\algnewcommand{\LLoop}[1]{\State\algorithmicloop \ \ #1}
\title{New Lower Bounds for Reachability in Vector Addition Systems}
\author{Wojciech Czerwi{\'{n}}ski}{University of Warsaw, Poland}{wczerwin@mimuw.edu.p}{https://orcid.org/0000-0002-6169-868X}{%%%Supported by the ERC grant INFSYS, agreement no. 950398.
}
\author{Isma\"{e}l Jecker}{University of Warsaw, Poland \and FEMTO-ST, CNRS, Univ. Franche-Comt\'{e}, France}{ismael.jecker@gmail.com}{}{%%%Supported by the ERC grant INFSYS, agreement no. 950398.
}
\author{S{\l}awomir Lasota}{University of Warsaw, Poland}{s.lasota@uw.edu.pl}{https://orcid.org/0000-0001-8674-4470}{Supported by the ERC grant INFSYS, agreement no. 950398 and by the NCN grant 2021/41/B/ST6/00535.}
\author{J\'{e}r\^{o}me Leroux}{LaBRI, CNRS, Univ. Bordeaux, France}{jerome.leroux@labri.fr}{}{Supported by the grant ANR-17-CE40-0028 of the French National Research Agency ANR (project BRAVAS).}
\author{{\L}ukasz Orlikowski}{University of Warsaw, Poland}{lo418363@students.mimuw.edu.pl}{}{%%%Supported by the ERC grant INFSYS, agreement no. 950398.
}
\authorrunning{W. Czerwi\'{n}ski, I. Jecker, S. Lasota, J. Leroux, and {\L}. Orlikowski}
\keywords{vector addition systems, reachability problem, pushdown vector addition system, lower bounds}
\begin{document}

\maketitle

\begin{abstract}
We investigate the dimension-parametric complexity of the reachability problem in
vector addition systems with states (VASS) and its extension with pushdown stack (pushdown VASS).
Up to now, the problem is known to be $\FF d$-hard for VASS of dimension $3d+2$
(the complexity class $\FF d$ corresponds to the $k$th level of the fast-growing hierarchy),
and no essentially better  bound is known for pushdown VASS.
We provide a new construction that improves the lower bound for VASS: $\FF d$-hardness in dimension $2d+3$.
Furthermore, building on our new insights we show a new lower bound for pushdown VASS: $\FF d$-hardness in dimension $\frac d 2 + 6$.
This dimension-parametric lower bound is strictly stronger than the upper bound for VASS,
which suggests that the (still unknown) complexity of the reachability problem in pushdown VASS 
is higher than in plain VASS (where it is Ackermann-complete).
\end{abstract}

\section{Introduction}

Petri nets, equivalently presentable as vector addition systems with states (VASS), 
are an established model of concurrency with widespread applications.
The central algorithmic problem for this model
is the \emph{reachability problem} which asks whether from a given initial configuration 
there exists a sequence of valid execution steps reaching a given final configuration.
For a long time the complexity of this problem remained
one of the hardest open questions in verification of concurrent systems. 
In 2019 Leroux and Schmitz made a significant breakthrough
by providing an Ackermannian upper bound~\cite{DBLP:conf/lics/LerouxS19}.
With respect to the hardness,
the exponential space lower bound, shown by Lipton already in 1976~\cite{Lipton76},
remained the only known for over 40 years
until a breakthrough non-elementary lower bound
by Czerwi{\'n}ski, Lasota, Lazic, Leroux and Mazowiecki in 2019~\cite{DBLP:conf/stoc/CzerwinskiLLLM19,jacm}.
Finally, a matching Ackermannian lower bound announced in 2021 independently
by two teams, namely Czerwi{\'n}ski and Orlikowski~\cite{DBLP:conf/focs/CzerwinskiO21}
and Leroux~\cite{DBLP:conf/focs/Leroux21},
established the exact complexity of the problem.

However, despite the fact that the exact complexity of
the reachability problem for VASS is settled,
there are still significant gaps in our understanding of the problem.
One such gap is the complexity of the reachability problem
\emph{parametrised by the dimension},
namely deciding the reachability problem
for $d$-dimensional VASS ($d$-VASS) for fixed $d \in \N$.
Currently, the exact complexity bounds are only known for dimensions one and two.
In these cases, the complexity depends on the representations of numbers
in the transitions, either unary or binary.
For binary VASS (where the numbers are represented in binary)
the reachability problem is known to be \np-complete
for $1$-VASS~\cite{DBLP:conf/concur/HaaseKOW09}
and \pspace-complete for $2$-VASS~\cite{BlondinFGHM15}.
For unary VASS the problem is \nl-complete for both $1$-VASS (folklore)
and $2$-VASS~\cite{EnglertLT16}.

Much less is known for higher dimensions,
and it is striking that even in the case of $3$-VASS we have a huge complexity gap.
The best complexity upper bound comes from the above mentioned work of 
Leroux and Schmitz~\cite{DBLP:conf/lics/LerouxS19},
where it is proved that the reachability problem for $(d-4)$-VASS is in $\FF d$
(here $\FF d$ denotes the $d$th level of the Grzegorczyk hierarchy
of complexity classes,
which corresponds to the fast growing function hierarchy).
In particular this shows that the reachability problem for $3$-VASS is in $\FF 7$
(recall that $\FF 3 = \tower$).

The recent Ackermann-hardness results provide lower bounds
for the reachability problem in fixed dimensions.
The result of Czerwi{\'n}ski and Orlikowski~\cite{DBLP:conf/focs/CzerwinskiO21}
yields $\FF d$-hardness for $6d$-VASS,
while the result of Leroux~\cite{DBLP:conf/focs/Leroux21}
establishes $\FF d$-hardness for $(4d+5)$-VASS.
Lasota improved upon these results
and showed $\FF d$-hardness of the problem for $(3d+2)$-VASS~\cite{Las22}.
In~\cite{DBLP:conf/lics/CzerwinskiO22}, additional
%%% lower bound
results
were obtained for specific
%%% fixed
dimensions:
\pspace-hardness for unary $5$-VASS,
\expspace-hardness for binary $6$-VASS
and \tower-hardness for unary $8$-VASS.

To summarise, despite significant research efforts
there are still several natural problems related to the VASS reachability problem
that present significant complexity gaps:
\begin{description}
\item[$\textsf{Q}_1$:]
What is the complexity of the reachability problem for VASS of dimension $3$?
It is known to be  $\pspace$-hard and in $\FF 7$;
\item[$\textsf{Q}_2$:]
What is highest dimension for which the complexity of the reachability problem is elementary?
It is known to fall within the range of $2$ to $8$;
\item[$\textsf{Q}_3$:]
What is the smallest constant $C$ such that
the complexity of the reachability problem
for $d$-VASS is in $\FF {Cd+o(d)}$?
It is known to fall within the range of $\frac{1}{3}$ to $1$.
\end{description}
In this work, we focus on addressing Question \textbf{\texorpdfstring{\boldmath $\textsf{Q}_3$}{Q\_3}}.
%\textbf{$\textsf{Q}_3$}.
We present new and improved lower bounds,
first in the standard setting of VASS,
and then in the setting of pushdown VASS (PVASS)
which extend the VASS model by incorporating a pushdown stack.

\subparagraph*{VASS reachability.}
Our first main result is a new complexity lower bound for the reachability problem which improves the gap:

\begin{restatable}{theorem}{vass}
\label{thm:vass}
The reachability problem for $(2d+3)$-VASS is $\FF d$-hard.
\end{restatable}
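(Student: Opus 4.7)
The plan is to reduce from reachability in Minsky machines whose counters are bounded by $F_d(n)$, a problem that is $\FF d$-hard for $d \geq 3$, and to simulate such a machine by a $(2d+3)$-VASS. Following the general blueprint of~\cite{Las22,DBLP:conf/focs/CzerwinskiO21}, the construction is inductive on the hierarchy level $k$: one builds a family of level-$k$ \emph{simulators} that can faithfully manipulate counters bounded by $F_k(n)$ using $2k + c$ VASS counters, for a fixed constant $c$ absorbed in the $+3$. The improvement over the $(3d+2)$ bound of Lasota comes from reducing the number of counters added at each level of the induction from three to two.

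The central device is a \emph{bounded counter} represented by a pair $(v, B-v)$ of VASS counters complementary with respect to the current level-$k$ bound $B$. Additions and subtractions on $v$ are trivial; the delicate step is level-up: given a level-$k$ simulator enforcing bound $B$, construct a level-$(k+1)$ simulator enforcing bound $F(B)$, where $F$ is roughly exponentiation. In the $(3d+2)$ scheme, this amplification spends three fresh counters per level (value, complement, and a scratch for safe transfers). My plan is to collapse one of these three roles by reusing the ambient complementary pair already present at level $k$ as the scratch needed at level $k+1$, interleaving the amplification phases so that the bookkeeping of the inner level is absorbed into invariants already maintained at the outer level. The $\triple{\cdot}{\cdot}{\cdot}{\cdot}{\cdot}{\cdot}$ primitive hinted at in the preamble captures exactly this three-quantity interplay implemented across two counters, presumably by encoding a $\coreadd{v}{1}$ at level $k+1$ as a synchronized chain of updates that propagates through the complementary pair at level $k$ and returns it to its original invariant before the next outer step.

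Correctness then splits into the two standard halves. Completeness, that every accepting Minsky run lifts to a reaching VASS run, follows directly from the constructive definition of each gadget. Soundness, that every reaching VASS run projects to an accepting Minsky run, is the main obstacle: since VASS cannot perform zero tests, the bound $B-v$ can only be enforced indirectly, and one must show that any attempt at ``cheating'' — e.g.\ by artificially decrementing the complement counter, thereby smuggling in an extra unit of value — must leave a positive residue on some designated counter that the target configuration requires to be zero. With one fewer scratch counter per level, the margin for enforcing these invariants shrinks, so the most delicate part of the proof will be designing the final collect phase that funnels all potential local cheating into a single ``defect'' counter, and then verifying by induction on $k$ that reaching the target forces zero defect at every level simultaneously.
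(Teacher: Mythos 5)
Your high-level architecture matches the paper's: both reduce from an $\FF d$-hard bounded two-counter problem and build the simulator inductively, adding two fresh counters per level of the hierarchy (this is indeed where the improvement over Lasota's $3d+2$ comes from). However, the core technical device you propose --- a bounded counter stored as a complementary pair $(v, B-v)$, with level-$(k{+}1)$ bookkeeping ``absorbed into invariants already maintained at level $k$'' --- is not what the paper does, and more importantly, you never supply the mechanism that makes soundness work. With a complementary pair alone, a VASS without zero tests cannot verify $v=0$ (that would require checking $B-v=B$, i.e., counting to $B=\F k(n)$, which is exactly what you cannot hardcode), and your proposal explicitly defers the ``collect phase that funnels all potential local cheating into a single defect counter'' as the delicate open part. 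That deferred part \emph{is} the theorem.

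The paper's actual device is the triple $\triple{A}{B}{\vr a}{\vr b}{\vr c}{\vr X}$ with the third component equal to $A\cdot(4^B-1)$, governed by the multiplicative invariant $(\vr a + \vr x + \vr y + \vr t)\cdot 4^{\vr b} = \vr a + \vr x + \vr y + \vr t + \vr c$. Each simulated zero test (and each amplification step) decrements $\vr b$ and must therefore exactly \emph{quadruple} a designated sum to preserve the invariant; the gadgets are arranged so that quadrupling is achievable only when the tested counter is genuinely zero and every flat loop is maximally iterated, while any deviation can only \emph{under}-quadruple, turning the equality into a strict inequality that no later step can repair. The detection is then a single final zero check on $\vr c$ (nonnegativity of counters makes the broken inequality with $\vr b = 0$ force $\vr c > 0$), which is precisely the ``one defect counter'' you were hoping for but did not construct. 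The same invariant, reused at every level of the lifting (with fresh $\vr b'',\vr c''$ and the strong-amplifier conditions ensuring a broken invariant propagates through inner calls), is what lets two counters per level suffice. Without this or an equivalent concrete cheating-detection mechanism, your proof does not go through.
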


A preliminary version of this result was presented in~\cite{DBLP:journals/corr/abs-2104-12695}.
In this revised version, we aim to present the result
in a conceptually simple framework by using the notion of \emph{triples},
a formalism that was originally developed in~\cite{DBLP:conf/stoc/CzerwinskiLLLM19},
and was heavily used in~\cite{DBLP:conf/focs/CzerwinskiO21}~and~\cite{Las22}.
We view this contribution as an important step towards understanding
the complexity of the VASS reachability problem
parametrised by the dimension.

\subparagraph*{PVASS reachability.}
The decidability of the reachability problem for PVASS
has been an important open problem for over a decade~\cite{DBLP:conf/fsttcs/AtigG11}. Despite efforts of the community, it
remains unknown even for PVASS of dimension~$1$ ($1$-PVASS),
namely automata with one counter and one pushdown stack.

\begin{conj}
The reachability problem for PVASS is decidable.
\end{conj}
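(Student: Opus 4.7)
The plan is to attack the conjecture by induction on the dimension $d$, starting from the still-open one-dimensional case. For 1-PVASS, I would first classify runs according to the sequence of matched push/pop pairs they perform on the stack, so that each run decomposes into segments between matched pairs that act purely as (nested) counter programs. The goal of this first step is to show that, for every pair of control states $p,q$ and every pair of stack symbols $a,b$, the set of triples $(c_p, c_q, w)$ describing the input counter value, the output counter value, and the stack factor pushed and later popped in such a segment is effectively semilinear (or at least effectively Presburger-definable after quotienting by an appropriate context-free grammar). If this holds, then global reachability reduces to an intersection-emptiness problem for a context-free language with Presburger constraints, which is decidable.

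The second step is to lift decidability from $(d-1)$-PVASS to $d$-PVASS. Here I would attempt to adapt the KLMST decomposition of Mayr--Kosaraju--Lambert, as reformulated by Leroux and Schmitz in the Ackermannian upper bound for plain VASS, to the pushdown setting. The idea is to replace the linear path constraints of plain VASS by context-free constraints dictated by the stack, and to argue that each non-perfect decomposition step either produces instances with strictly shorter stack shapes (which can be collapsed into a $(d-1)$-PVASS by eliminating the only genuinely unbounded counter along the segment) or instances in which one counter can be safely abstracted away. Termination of the decomposition would rely on a new well-quasi-ordering combining the componentwise order on $\mathbb{N}^d$ with some well-founded ordering on stack behaviours.

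The main obstacle, and the reason this conjecture has resisted for over a decade, is precisely that no such well-quasi-ordering is known: the componentwise order on $\mathbb{N}^d$ yields decidability of coverability for plain VASS via Dickson's lemma, but a PVASS configuration carries an unbounded stack word, and neither the prefix nor the subword ordering on words combines cleanly with counter reachability. I therefore expect the crux of any successful attack to lie in identifying a new structural invariant of PVASS runs that simultaneously tames both sources of infiniteness. The lower bounds established in the present paper, which push $\mathbf{F}_d$-hardness down into low dimensions even in the pushdown setting, actually constrain the shape of such an invariant by forcing the witnessing complexity to be very large; paradoxically, a deeper understanding of these hardness constructions may be a prerequisite to finding the right decidability argument, and it is for this reason that I would treat progress on the lower-bound side as a strategic first move rather than an orthogonal endeavour.
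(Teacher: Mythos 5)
The statement you are addressing is not a theorem of the paper: it is stated as an open \emph{conjecture}, and the authors explicitly note that decidability is unknown even for $1$-PVASS. There is therefore no proof in the paper to compare against, and your text is not a proof either --- it is a research programme, each of whose two steps rests on an unestablished (and quite possibly false) assumption. In the first step you posit that the input/output counter values together with the matched stack factor of a segment form an effectively semilinear (or Presburger-definable) relation. This is precisely the kind of statement that is not known to hold: already for plain VASS the reachability relation is not semilinear in general, and the present paper's lower bounds (Ackermann-hardness in fixed dimension, $\FF{d}$-hardness for $(\lfloor d/2\rfloor+6)$-PVASS) strongly suggest that no semilinear description of PVASS segment behaviours can exist, since such a description would typically yield an elementary decision procedure. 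Reducing to ``intersection emptiness of a context-free language with Presburger constraints'' also does not obviously help, because the constraints you need are imposed at \emph{every prefix} of the run (nonnegativity of the counter throughout), not merely on the Parikh image of the whole word; emptiness of context-free languages intersected with such prefix-closed counter constraints is essentially the $1$-PVASS reachability problem itself, so the reduction is circular.

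The second step has the same character: you acknowledge yourself that the well-quasi-ordering needed to make a KLMST-style decomposition terminate in the pushdown setting is not known, and the claim that non-perfect decomposition steps produce instances collapsible to $(d-1)$-PVASS is asserted without any mechanism. Since the core difficulty you correctly identify (taming the stack and the counters simultaneously) is left entirely unresolved, the proposal does not establish the conjecture; it only restates the obstacles. This is a reasonable survey of why the problem is hard, but it should not be presented as a proof attempt, and nothing in it could be spliced into the paper as a resolution of the conjecture.
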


It is important to acknowledge that the PVASS setting is complex,
and few decidability results are known.
Some progress has been made in the study of the \emph{coverability}
problem, a variant of the reachability problem 
which asks whether from a given initial configuration there exists a sequence
of valid execution steps that reaches a configuration
\emph{greater} than the given final configuration.
Notably, the coverability problem has been shown to be decidable
for $1$-PVASS~\cite{DBLP:conf/icalp/LerouxST15} (and mentioned to be in \expspace).

Interestingly, despite the slow progress on determining upper bounds for PVASS,
there is limited knownledge about lower bounds as well.
To the best of our knowledge, the only lower bound
that is  not directly implied by the results on VASS
concerns (again) the coverability problem for $1$-PVASS,
which has been established as \pspace-hard~\cite{DBLP:journals/ipl/EnglertHLLLS21}.

As for our contribution,
our second main result is the first
complexity lower bound for the PVASS reachability problem 
that is not immediately inherited from VASS:

\begin{restatable}{theorem}{pvass}
\label{thm:pvass}
The reachability problem for $(\lfloor \frac d 2 \rfloor + 6)$-PVASS
is $\FF {d}$-hard.
\end{restatable}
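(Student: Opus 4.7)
The plan is to adapt the triple-based amplification scheme underlying \Cref{thm:vass} and to exploit the pushdown stack to collapse several VASS levels of the construction into a single PVASS level. The key observation is that in the pure VASS construction, every new level of the fast-growing hierarchy costs two new counters: one plays the role of the amplified value and one acts as a control counter measuring the witness produced at the preceding level. With a stack available, a sequence of such witnesses can be stored \emph{sequentially} on the stack rather than \emph{simultaneously} in distinct counters, so that multiple VASS levels can be folded into one pushdown level at the cost of only a single extra counter. This accounts for the improvement from the $2d$ factor in \Cref{thm:vass} to the $\lfloor d/2\rfloor$ factor of the present theorem.

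Concretely, I would proceed in three steps. First, I would revisit the triple formalism and isolate the minimal amplification unit it uses: a unit that, using two counters plus a small number of control counters, turns a value $n$ into a value roughly $f(n)$ for a faster-growing function $f$. Second, I would design a pushdown variant of this amplification unit in which the auxiliary witness counters are replaced by a stack-based encoding: the values that the VASS construction keeps in dedicated counters are pushed onto the stack in a strictly controlled order, and popped back by the downward checking phase of the next amplification. Third, I would iterate this pushdown amplification $\lfloor d/2\rfloor$ times to reach the level $\FF d$, and add a constant number of extra counters (six suffices) for input/output handling, boundary checks, and a final zero-test phase that verifies the stack has been properly emptied.

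The main technical obstacle is controlling the extra freedom the stack grants to the adversary. In the VASS setting, the triple arguments show that any attempt to cheat on a counter copy is eventually detected because some auxiliary counter would be forced below zero. With a stack, however, the adversary can interleave pushes and pops, or reorder them, in ways that no fixed counter could directly forbid. The correctness lemma of the pushdown amplification must therefore establish a well-bracketing property: on every run reaching the target configuration, the stack operations associated with a single amplification must nest properly and leave the stack empty when that amplification completes. I expect proving this well-bracketing property to be the crux of the argument, and it will require strengthening the triple invariants so that the stack profile is coupled with the counter values in such a way that any deviation either forces a counter below zero or leaves residue on the stack that blocks reaching the final configuration. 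Once this is in place, the inductive composition of pushdown amplifications follows the same template as the proof of \Cref{thm:vass}.
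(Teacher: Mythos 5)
Your high-level plan points in the right direction: the paper does prove the theorem by taking the triple/amplifier machinery of Theorem~\ref{thm:vass} and \emph{delegating} some of the counters to the stack (a delegated counter's value becomes the number of occurrences of its symbol on the stack), which is what brings the count down to $\lfloor d/2\rfloor+6$. But you have misdiagnosed where the difficulty lies, and the step you identify as the crux is not the one that needs proving. You worry that the stack gives the adversary \emph{extra} freedom to cheat by reordering pushes and pops, and you propose a ``well-bracketing'' invariant to detect this. In the paper's construction the opposite is true: a \pop{s} is strictly \emph{more} restrictive than a decrement (it additionally requires $s$ to be on top), so every run of the pushdown program projects immediately to a run of the counter program with the same source and target. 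Soundness is the easy direction and needs no new invariant. The genuinely hard direction is \emph{completeness}: the honest ($\vr Z$-zeroing) run of the counter amplifier decrements its two input counters $\vr b'$ and $\vr c'$ in some specific interleaved order, and to delegate both of them to the stack the \emph{caller} must have pushed exactly that interleaving in reverse. Your proposal never addresses how the caller produces this interleaving.

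This omission is not cosmetic, because it is precisely what forces the structure that yields the $\lfloor d/2\rfloor$ factor. The paper alternates between two lifting constructions: one ($\successor{{\,\prog Q\,}}$) that delegates both new input counters $\vr b''$ and $\vr c''$, usable only when the inner amplifier keeps $\vr c'$ as a real counter, and one ($\overline{{\,\prog Q\,}}$) that must keep $\vr c''$ as a real counter and contains an explicit shuffle-pushing gadget so that it can lay down \emph{any} interleaving of $\vr b'$ and $\vr c'$ that the inner call will need to pop. Thus the paper still performs $d-1$ single-level liftings, saving a counter only on every other one; your accounting, in which a single pushdown amplification ``folds two VASS levels into one'' and is iterated $\lfloor d/2\rfloor$ times, does not correspond to a construction you have described, and without the alternation/shuffling mechanism it is not clear how it could. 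As written, the proposal would get stuck at the first attempt to place two delegated input counters of a nested amplifier onto the same stack.
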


Notably, Theorem~\ref{thm:pvass} implies that for sufficiently large $d$
the reachability problem for $d$-PVASS is harder than
the problem for $(d+1)$-VASS (which is
a subclass of $d$-PVASS as the pushdown stack can keep track of one VASS counter).
Indeed the problem for $d$-PVASS is $\FF {2d-12}$-hard by Theorem~\ref{thm:pvass},
while the problem for $(d+1)$-VASS is in $\FF {d+5}$ by~\cite{DBLP:conf/lics/LerouxS19}.

While our results indicate that the reachability problem for PVASS
is harder than for VASS,
some known results about PVASS hint that
even higher lower bounds might be proved:
In~\cite{DBLP:conf/csl/LerouxPS14} it was shown that PVASS
are able to weakly compute functions of
hyper-Ackermannian
growth rate.
Based on this observation we propose the following two conjectures:

\begin{conj}
There exists a fixed dimension $d \in \N$ such that
the reachability problem for $d$-PVASS is \ackermann-hard.
\end{conj}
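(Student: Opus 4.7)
The plan is to push the ideas behind Theorem~\ref{thm:pvass} further by using the pushdown stack to encode the level parameter of the fast-growing hierarchy itself, rather than spending an additive $\lfloor d/2 \rfloor$ counters on it. The starting point is the hyper-Ackermannian weak-computation capability identified by Leroux, Praveen and Sutre~\cite{DBLP:conf/csl/LerouxPS14}, which already shows that the stack is strong enough, in principle, to drive unboundedly many levels of recursion; the challenge is to turn this weak computation into a precise reachability lower bound for a \emph{fixed} number of counters.

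Concretely, I would revisit the triple-based construction behind Theorem~\ref{thm:pvass} and identify, at each level of the inductive amplification, which counters serve only as \emph{bridges} that carry data from one level down to the next, and which counters are really used only locally at the current level. The intention is to externalise the bridges to the stack: each level transition would push a constant number of stack symbols encoding the parameters propagated downwards, and pop them on return, so that the same fixed set of counter registers is reused at every level. Combined with an outer loop that first writes onto the stack a description of $\F d$ for a parameter $d$ supplied in the input alphabet, and then consumes that description level by level, this would produce a PVASS of fixed dimension whose reachable configurations can encode values of Ackermannian size.

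The main obstacle, as I see it, is the conversion from weak to exact computation under the LIFO access discipline of the stack. In the pure VASS setting, the slack generated by weak computation is cancelled level by level using counters dedicated to each level; once the counter dimension is fixed, this luxury disappears, and one must rely on the stack alone to carry all correctness certificates, including the certification of zero-tests. Since a single pop out of order can destroy invariants that a VASS maintains for free, making the triple amplification lemma compatible with a pushdown that is simultaneously used as a scratch area for hierarchical counting and as a certificate that no overshoot occurred is, I believe, the crux of the conjecture, and may well require ideas beyond a direct port of the existing triple framework.
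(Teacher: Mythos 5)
This statement is one of the paper's open \emph{conjectures}: the authors explicitly propose it on the strength of the weak-computation result of Leroux, Praveen and Sutre, and offer no proof. Your proposal, by your own admission, does not close it either: it is a research plan whose pivotal step is left unresolved. So the comparison here is not between two proofs but between an open problem and a sketch of how one might attack it.

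That said, your diagnosis of where the difficulty lies is accurate and matches the structural reason the paper's own technique stops short of the conjecture. In the constructions of Sections~\ref{sec:ampl} and~\ref{sec:pampl}, each level of amplification permanently consumes fresh \emph{end counters} (one per level for VASS, one per two levels for PVASS); these are precisely the zero-checked certificates that upgrade weak computation to exact computation of the triples. Fixing the dimension removes this resource, and the stack cannot straightforwardly absorb it: the end counters must all be testable for zero \emph{at the very end of the run}, whereas anything delegated to the stack is subject to the LIFO discipline and to being buried under the bookkeeping of deeper recursion levels. Your idea of externalising the ``bridge'' data to the stack and reusing a fixed register set at every level is plausible as far as the forward simulation goes, but the proposal gives no mechanism for detecting a cheating run --- no analogue of the invariant $(\vr a + \vr c + \vr t + \suma{\vr Z})\cdot 4^{\vr{b''}} = \vr a + \vr c + \vr t + \suma{\vr Z} + \vr{c''}$ that survives when $\suma{\vr Z}$ is no longer a sum of finitely many dedicated counters. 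Until that certification mechanism is supplied, the argument establishes nothing beyond what~\cite{DBLP:conf/csl/LerouxPS14} already gives, namely weak computability; the conjecture remains open.
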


\begin{conj}
The reachability problem for PVASS is \hyperackermann-hard.
\end{conj}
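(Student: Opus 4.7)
The plan is to lift the reduction strategy behind Theorem~\ref{thm:pvass} from finite levels of the fast-growing hierarchy to all ordinal levels below $\omega^\omega$, by exploiting the pushdown stack to encode an ordinal counter. The starting point is the weak computation result of~\cite{DBLP:conf/csl/LerouxPS14}, which shows that PVASS can weakly compute functions of growth rate $\F{\omega^\omega}$; the goal is to convert weak computation into an exact reachability hardness reduction, via a Minsky-style machine whose counters are bounded by $\F{\omega^\omega}(n)$ for an input $n$, whose halting is $\hyperackermann$-hard.

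First I would introduce an ordinal-indexed family of triple gadgets $\triple$-like amplifiers $\mathcal{A}_\alpha$ for $\alpha < \omega^\omega$, refining the finite-level triples used in Theorem~\ref{thm:vass} and Theorem~\ref{thm:pvass}. For a successor ordinal $\alpha+1$, the amplifier calls $\mathcal{A}_\alpha$ a budget-controlled number of times, mirroring the VASS amplification pattern. For limit ordinals of the form $\omega \cdot (k{+}1)$ and $\omega^{k+1}$, the amplifier diagonalises: it uses the stack to record the current index in the fundamental sequence of $\alpha$, pops one symbol per diagonalisation step, and at each step invokes a lower-level amplifier $\mathcal{A}_{\alpha'}$ whose ordinal index $\alpha'$ is itself determined by the stack content. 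Since ordinals below $\omega^\omega$ admit a Cantor-normal-form encoding as finite sequences over a finite alphabet, the ordinal bookkeeping fits inside the pushdown stack while the VASS counters handle only numerical amplification, keeping the dimension bounded by a constant.

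The second step is to plug these amplifiers into the triple framework: a $\F{\omega^\omega}$-budget is first produced by a chain $\mathcal{A}_1, \mathcal{A}_\omega, \mathcal{A}_{\omega^2}, \ldots$ driven by the input $n$, and then used as the capacity bound for a standard zero-test simulation of a Minsky machine. The usual budget/anti-budget trick rejects runs that exceed the bound; the new ingredient is that every amplifier is designed to both consume and regenerate a canonical stack content, so that deviating from the intended ordinal schedule leaves the stack non-empty at the end and blocks reachability of the final configuration.

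The hard part will be exactly this conversion of \emph{weak} computation into \emph{exact} reachability. In~\cite{DBLP:conf/csl/LerouxPS14} the stack is used permissively: intermediate counter values are produced only up to the target function, and nothing forces the machine to reach the maximum. To forbid shortcuts I would make every amplifier \emph{reversible} in the sense of the triples formalism~\cite{DBLP:conf/stoc/CzerwinskiLLLM19,Las22}, meaning that the stack alphabet consumed and produced match symbol-for-symbol and that the net counter effect is zero outside the designated input/output counters, so that correctness of amplification becomes equivalent to the stack returning to its canonical form. A secondary obstacle is the limit-ordinal machinery at $\omega^{k}$ for $k \geq 2$, where a single linear pushdown context no longer suffices to describe the nested fundamental sequences; handling this likely requires a tree-shaped stack discipline, where each symbol opens a fresh context in which a lower-level amplifier is run to completion before the enclosing context resumes. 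Overcoming this obstacle, and proving that any successful reachability path must follow the ordinal schedule faithfully, is where I expect the bulk of the technical work to lie.
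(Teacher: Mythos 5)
The statement you are addressing is a \emph{conjecture} in the paper, not a theorem: the authors explicitly leave it open, offering no proof and only pointing to~\cite{DBLP:conf/csl/LerouxPS14} as circumstantial evidence. So there is no proof in the paper to compare against, and what you have written is a research programme rather than a proof. It is a reasonable programme, and you correctly isolate the two central obstacles, but neither is resolved.

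The first obstacle is genuine and you name it yourself: promoting the \emph{weak} computation of~\cite{DBLP:conf/csl/LerouxPS14} to an \emph{exact} hardness reduction. Your proposed fix, ``reversible'' amplifiers that consume and regenerate a canonical stack content so that shortcutting leaves the stack non-empty, does not by itself accomplish this. Weak computation's permissiveness is not (only) about leaving the stack in a bad state; it is about the run being allowed to stop amplifying strictly below the target value while still reaching a clean final configuration. An amplifier can perfectly restore a canonical stack and still under-amplify. In the paper's finite-level constructions, maximality is enforced by a quantitative counter invariant of the form $(\vr a + \vr c + \vr t)\cdot 4^{\vr b} = \vr a + \vr c + \vr t + \vr c'$, and the whole point of Lemma~\ref{lem:amp}/\ref{lem:amppushdown} is that maintaining this invariant with only $O(d)$ or $O(d/2)$ counters already costs $d$. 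You would need an invariant of this kind that is checkable at ordinal level $\omega^\omega$ with a \emph{constant} number of counters; the stack bookkeeping you describe records which ordinal you are at, but it does not certify that the numerical amplification was maximal at each step. This is precisely the gap the conjecture encodes.

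The second obstacle is also real and you again flag it honestly, but the proposed remedy is outside the model. A PVASS has a single ordinary pushdown; ``a tree-shaped stack discipline, where each symbol opens a fresh context in which a lower-level amplifier is run to completion before the enclosing context resumes'' is the behaviour of a higher-order pushdown, not of an ordinary one. The nested well-bracketing you want is not something a 1-stack machine can enforce for free, and encoding it on a single stack while simultaneously reserving the stack for the ordinal index and for the delegated counters of the inner amplifier (the technique of Section~\ref{sec:pampl}) creates a resource conflict that your sketch does not address. Until you can say how a single stack supports both the Cantor-normal-form bookkeeping for all $\alpha<\omega^\omega$ \emph{and} the delegated-counter role that Lemma~\ref{lem:amppushdown} already assigns to it, the construction does not close. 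In short: plausible direction, correct identification of the bottlenecks, but no proof, and the two bottlenecks are exactly where the known techniques stop working.
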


\section{Preliminaries}

\subparagraph*{Fast-growing hierarchy.} Let $\Npos = \N\setminus\{0\}$ be the set of positive integers.
We define the complexity classes $\FF i$ corresponding to the $i$th level in the Grzegorczyk 
Hierarchy~\cite[Sect.~2.3, 4.1]{Schmitz16toct}.
To this aim we choose to use the following family of functions $\F i : \Npos\to\Npos$, indexed by $i\in \N$:
\begin{align} \label{eq:F1}
\F 0(n) = n+2, \qquad \F {i+1} = \successor{{\F i}} \quad \text{where } \quad
\successor{{\,F\,}}(n) = F^{n-1}(1) = \underbrace{F \circ F  \circ \ldots \circ F}_{n - 1}(1).
\end{align}
In particular, $\F 1(n) = 2n-1$ and $\F i(1) = 1$ for all $i\in\Npos$. 
Using the functions $\F i$, we define  the complexity classes
$\FF i$, indexed by $i\in \Npos$, of problems solvable in deterministic time
$\F i \circ \F {i-1}^m (n)$ for some $m\in\N$:
\[
\FF i = \bigcup_{m\in\N} \DTIME{\F i \circ \F {i-1}^m (n)}.
\]
Intuitively speaking, the class $\FF i$ contains all problems solvable in time $\F i(n)$, and is closed under reductions
computable in time of lower order $\F {i-1}^m(n)$, for some fixed $m\in\Npos$.
In particular, $\FF 3 = \tower$
(problems solvable in a tower of exponentials of time or space
whose height is an elementary function of input size).
The classes $\FF i$ are robust with respect to
the choice of fast-growing function hierarchy (see \cite[Sect.4.1]{Schmitz16toct}).
For $i\geq 3$,  instead of deterministic time, one could equivalently take nondeterministic time, or space as all these definitions collapse.

\subsection{Counter programs}

A \emph{counter program} (or simply a \emph{program}) is a sequence of (line-numbered) commands, each of which is 
of one of the following types:
\begin{quote}
\begin{tabular}{l@{\qquad}l}
\inc{\vr{x}}                       & (increment the counter~$\vr{x}$ by one) \\
\dec{\vr{x}}                       & (decrement the counter~$\vr{x}$ by one) \\
\goto{$L$}{$L'$}                   & (nondeterministically jump to either line~$L$ or line~$L'$) \\
\testz{\vr{x}}                     & (zero test: continue if counter~$\vr{x}$ equals $0$)
\end{tabular}
\end{quote}
Counter programs \emph{with a pushdown stack}
(or simply \emph{programs with stack})
are enhanced versions of plain counter programs
that incorporate a stack
containing a word over a fixed stack alphabet $\mathbf{S}$.
The stack content is modified by using
two additional types of commands:
\begin{quote}
\begin{tabular}{l@{\qquad}l}
\push s                       & (push the symbol $s\in\vr S$ at the top of the stack) \\
\pop s                       & (pop the symbol $s\in\vr S$ if it is at the top of the stack) 
\end{tabular}
\end{quote}
The command \pop s fails if the stack is empty or if the top symbol
is different from $s$.
A \emph{configuration} of a counter program with pushdown stack consists, as expected,
of a valuation of its counters plus a stack content. 

Counters are only allowed to have nonnegative values.

\subparagraph*{Conventions.}
We are particularly interested in counter programs \emph{without zero tests}, i.e., ones that use no zero test command.
In the sequel, unless specified explicitly, counter programs are implicitly assumed to be \underline{without zero tests}.
Moreover, we use the syntactic sugar \textbf{loop}, which iterates
a sequence of command a nondeterministic number of times (see Example~\ref{ex:1}).
Finally, we write consecutive increments and decrements of different variables on a single line and we use the following shorthands:
\begin{quote}
\begin{tabular}{l@{\qquad}l}
\add{\vr x}{m}                      & (increment the counter~$\vr{x}$ by $m$) \\
\sub{\vr x}{m}                     & (decrement the counter~$\vr{x}$ by $m$) \\
$\vr x \tran \vr y$                   & (decrement the counter~$\vr x$ by one and increment the counter~$\vr y$ by one)\\
$\vr x \overset{m}{\tran} \vr y$      & (decrement the counter~$\vr x$ by $m$ and increment the counter~$\vr y$ by $m$)
\end{tabular}
\end{quote}

\begin{example} \label{ex:1}
As an illustration, consider three different presentations of the same the program with three counters $\vr C = \set{\vr x, \vr y, \vr z}$:
\medskip
\\
\begin{minipage}[t]{0.25\textwidth}
\strut\vspace*{-\baselineskip}

\PROGnoname{0.9}{ex}{
\State \goto 2 6
\State \dec{\vr{x}}  \label{l:testz.iter} \label{l:testz.iter2}
\State \inc{\vr{y}}                
\State \add{\vr{z}}{2}                               \label{l:testz.end}
\State \goto 1 1
\State \inc{\vr z}
}
\end{minipage}\begin{minipage}[t]{0.25\textwidth}
\strut\vspace*{-\baselineskip}

\PROGnoname{0.9}{loop}{
\Loop
\State \dec{\vr{x}}  \State \inc{\vr{y}}                
\State \add{\vr{z}}{2}                              \EndLoop
\State \inc{\vr z}
}
\end{minipage}\begin{minipage}[t]{0.4\textwidth}
\strut\vspace*{-\baselineskip}

\PROGnoname{0.9}{loop2}{
\LLoop{
$\vr x \tran \vr y$ \quad                
\add{\vr{z}}{2}}                              \State \inc{\vr z}
}
\end{minipage}
\pagebreak
\\
The program repeats the block of commands in lines \ref{l:testz.iter}--\ref{l:testz.end} 
some number of times chosen nondeterministically (possibly zero, although not infinite because $\vr{x}$ is decreasing,
and hence its initial value bounds the number of iterations) and then increments $\vr z$.

We emphasise that counters are only permitted to have nonnegative values.  
In the program above, that is why the decrement in line~\ref{l:testz.iter2} works also as a non-zero test.
\end{example}

\subparagraph*{Runs.}
Consider a program with counters $\vr X$.
By $\cval{\vr X}$
we denote the set of all valuations of counters.
Given an initial valuation of counters,
a \emph{run} (or \emph{execution}) of a counter program is a finite 
sequence of executions of commands.
A run which has successfully finished we call \emph{complete};
otherwise, the run is \emph{partial}.  Observe that, due to a decrement that would cause a counter to become negative, 
a partial run may fail to continue because it is blocked from further execution.  
Moreover, due to nondeterminism of \textbf{goto}, a program may have various runs
from the same initial valuation.

Two programs $\prog P, \prog Q$ may be \emph{composed} by concatenating them, written $\prog P\ \prog Q$.
We silently assume the appropriate re-numbering of lines referred to by \textbf{goto} command in $\prog Q$.

Consider a distinguished set of counters $\vr Z \subseteq \vr X$.
A run of $\prog P$ is \emph{$\vr Z$-zeroing} if it is complete and all counters from $\vr Z$ are zero at the end.
Given two counter valuations $r, r'\in \cval{\vr X}$
we say that the program $\vr Z$-\emph{computes} the valuation $r$ from $r'$
if $\prog P$ has exactly one $\vr Z$-zeroing run from $r'$
and the end configuration is $r$.
We also say that the program \emph{$\vr Z$-computes nothing} from $r'$
if $\prog P$ has no $\vr Z$-zeroing run from $r'$.
For instance,
in Example~\ref{ex:1}
the programs $\set{\vr x}$-compute,
from any valuation satisfying $\vr x = n \in \mathbb{N}$ and $\vr y = \vr z = 0$,
the valuation satisfying $\vr x = 0$ (trivially),
$\vr y = n$ and $\vr z = 2n+1$.

\subparagraph*{Maximal iteration.}
The proofs of this paper often focus on the number of iterations of the \textbf{loop} construct.
Consider a program $\prog P$ containing a \emph{flat} \textbf{loop}, i.e., a loop whose body consists 
of only increment or decrement commands, such that each counter appears in at most one of these
commands  (like the programs in Example~\ref{ex:1}).
We say that this loop is \emph{maximally iterated} in a given run of a $\prog P$ if
some counter that is decremented in its body is zero at the exit from the loop. 
In particular, a maximally iterated loop could not be iterated any further without violation of the 
nonnegativity constraint on the decremented counter.
For instance,
the loop in Example~\ref{ex:1} is maximally iterated by the $\{ \vr x \}$-zeroing runs.
Needless to say, maximal iteration needs not happen in general, for instance the program in Example~\ref{ex:1}
has multiple complete runs that do not admit this property.

\section{Main results and structure of the paper}
Counter programs (without zero test)
provide an equivalent presentation to the standard models of Petri nets and VASS.
The transformations between these different models are straightforward and the blowups are polynomial.
For instance, a program can be transformed into a VASS by taking one state for each line of the program,
and adding an appropriate transition corresponding to each counter update instruction.
Note that the dimension of the VASS obtained is equal to the number of counters of the program.
In this paper, we focus solely on counter programs,
and we prove that the following problem is $\FF d$-hard for every $d \geq 3$: 

\begin{problem}\label{prob:reachability_hard}
\
\begin{description}
\item[\upshape Input:]
A program $\prog P$ using $2d + 3$ counters.
\item[\upshape Question:] Is there a complete run of $\prog P$
that starts and ends with all counters equal to $0$?
\end{description}
\end{problem}
The equivalence between programs and VASS
then directly leads to our first main result, Theorem~\ref{thm:vass}.

For programs with a pushdown stack,
$\FF d$-hardness can be achieved with less counters.
We show that the following problem is $\FF d$-hard for every $d \geq 3$: 

\begin{problem}\label{prob:Creachability_hard}
\
\begin{description}
\item[\upshape Input:]
A program with stack $\prog Q$ using $\lfloor \frac d 2 \rfloor + 6$ counters.
\item[\upshape Question:] Is there a complete run of $\prog Q$
that starts and ends with all counters equal to $0$?
\end{description}
\end{problem}
Again, the equivalence between programs and VASS
yields our second main result, Theorem~\ref{thm:pvass}.

Let us now introduce the known $\FF d$-hard problem
that we will reduce to Problems~\ref{prob:reachability_hard} and~\ref{prob:Creachability_hard}.
Fortunately, we do not have to search too far for it:
counter programs with two counters \emph{and zero tests}
are Turing-complete~\cite{minsky_67}.
This implies that the reachability problem is undecidable in that setting.
However, similarly to Turing machines,
decidability is recovered by imposing limitations on the
executions,
such as bounding their length, the maximal counter size,
or the number of zero tests.
For our purposes the latter is the easiest to use,
thus, we present the problem that we will reduce from,
a variation of the ``$\F k$-bounded Minsky Machine Halting Problem''
proved to be $\FF d$-complete in~\cite[Section 2.3.2]{Schmitz16toct}:
\begin{problem}\label{prob:zeroreachability_hard}
\
\begin{description}
\item[\upshape Input:]
A program $\prog P$ with two zero-tested counters, and a bound $n \in \Npos$.
\item[\upshape Question:] Is there a complete run of $\prog P$
that starts and ends with all counters equal to $0$
and does exactly $\F d (n)$ zero tests?
\end{description}
\end{problem}
\medskip
The rest of this section is devoted to the presentation of the tools we use to reduce
Problem~\ref{prob:zeroreachability_hard} to Problems~\ref{prob:reachability_hard} and~\ref{prob:Creachability_hard}.
Following the structure of similar reductions presented in~\cite{DBLP:conf/focs/CzerwinskiO21}~and~\cite{Las22},
our reduction is divided into two main steps.

In the first step, we show how a program \emph{without zero test}
can simulate a bounded number of zero tests.
To achieve this we rely on the concept of \emph{triples},
which are specific counter valuations that allow
to eliminate the zero tests and instead verify whether a particular invariant still holds at the term of the run.
However, doing so requires an initial triple directly proportional to the number of zero tests we aim to simulate.
Since we intend to simulate $\F d(n)$ zero tests, which is a rather large number,
directly applying this approach would result in an excessively large program.

Thus, in the second step, we construct compact \emph{amplifiers}.
These amplifiers are small programs that compute functions of substantial magnitude (such as $\F d$)
while using a small number of counters (namely $2d + 4$, or $\lfloor \frac d 2 \rfloor + 4$ counters along with a stack).

We now define formally the notions required for these two steps.
This will allow us to state the main lemmas proved in this paper,
and use them to construct the reduction proving our main result.
The proofs of the lemmas can then be found in the following sections.

\subparagraph*{Triples.}
The concept of \emph{triples}
plays a central role in all the constructions presented within this paper.
Given a set of counters $\vr {X}$, three distinguished counters $\vr a, \vr b, \vr c \in \vr X$
and $A, B\in\mathbb{N}$, we denote by
$\triple A B {\vr a} {\vr b} {\vr c} {\vr X}$
the counter valuation satisfying
\begin{align} \label{eq:triple}
\vr a = A, \quad
\vr b = B, \quad
\vr c = A\cdot(4^B-1), \quad
\vr x = 0 \textup{ for every } \vr x \in \vr X \setminus \set{\vr a, \vr b, \vr c}.
\end{align}
Informally we sometimes call such valuation a \emph{$B$-triple},
or simply a \emph{triple} 
over the counters $\vr a, \vr b, \vr c$.
The interest of triples lies in their ability to establish invariants
that enable the detection of unwanted behaviours in counter programs.
For instance, in Section~\ref{sec:zero},
we show how to use triples to replace zero tests by proving the following lemma:
\begin{restatable}{lemma}{zero}
\label{lem:zero}
Let $\prog P$ be a program of size $m$ using two zero-tested counters.
There exists a program $\prog P'$ of size $\O(m)$ with six counters $\vr X$ such that
for every $B \in \Npos$ the two following conditions are equivalent:
\begin{itemize}
\item 
There exists a complete run of $\prog P$ that starts and ends with all counters equal to $0$ and
performs exactly $B$ zero tests;
\item There exists a complete run of $\prog P'$
that starts in some configuration
$\triple A {B} {\vr a} {\vr b} {\vr c} {\vr X}$ with $A \in \Npos$
and ends in a configuration where all the counters except $\vr a$ are zero.
\end{itemize}
\end{restatable}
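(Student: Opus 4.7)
The plan is to construct $\prog P'$ by equipping $\prog P$ with three ``overhead'' counters $\vr a, \vr b, \vr c$ (the triple registers), a copy of the two zero-tested counters of $\prog P$ (call them $\vr u_1, \vr u_2$), and a single auxiliary register $\vr t$; this accounts for the six counters $\vr X = \set{\vr a,\vr b,\vr c,\vr u_1,\vr u_2,\vr t}$. All non-zero-test commands of $\prog P$ (increments, decrements, nondeterministic jumps) are translated verbatim into $\prog P'$, so the only thing that needs replacement is each zero-test command \textbf{zero?}~$\vr u_i$.

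Each such zero test is replaced by a \emph{gadget} whose effect, if executed faithfully, is to decrement $\vr b$ by $1$, subtract $3\vr a$ from $\vr c$, and quadruple $\vr a$, thereby transforming any current $B'$-triple $\triple {A'} {B'} {\vr a} {\vr b} {\vr c} {\vr X}$ into a $(B'-1)$-triple $\triple {4A'} {B'-1} {\vr a} {\vr b} {\vr c} {\vr X}$, using the identity $A'(4^{B'}-1) = 4A'(4^{B'-1}-1) + 3A'$. The gadget routes this bookkeeping \emph{through} the counter $\vr u_i$ and the auxiliary $\vr t$: typically one transfers $\vr u_i$ onto $\vr t$ and back while interleaving the operations on $\vr a$ and $\vr c$, and at the end requires $\vr t = 0$ by structural dead-ends. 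Combined with the fact that the overall run is required to finish with $\vr b = \vr c = 0$, this ensures that (i) the gadget is invoked exactly $B$ times, since each invocation consumes exactly one unit of $\vr b$ and $3A\cdot 4^{B-i}$ units of $\vr c$, whose sum telescopes to the initial value $A(4^B-1)$; and (ii) at each invocation the counter $\vr u_i$ must be $0$, because any residue would propagate to $\vr t$ and prevent the subsequent gadgets from reaching the correct triple parameters.

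Correctness is then proven in two directions. For the forward direction, given a complete zero-test run of $\prog P$ with exactly $B$ zero tests, we execute the translated commands and, at each zero test, execute the gadget on the current triple; since $\vr u_i = 0$ at those moments, the gadget succeeds and the triple is correctly rotated; after $B$ gadgets the triple has parameters $(4^B A, 0, 0)$, all counters except $\vr a$ are zero, and the simulation ends cleanly. For the backward direction, a complete run of $\prog P'$ ending with only $\vr a$ nonzero must have consumed $\vr b$ entirely (hence exactly $B$ gadget invocations) and $\vr c$ entirely (forcing the ``$3A'$'' subtractions to line up with the correct $A'$ at every step), and by the coupling above the counter $\vr u_i$ was zero at each invocation; extracting the gadget invocations as zero tests yields a valid zero-test run of $\prog P$ with exactly $B$ zero tests. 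The size of $\prog P'$ is $\O(m)$ because each command of $\prog P$ is replaced by a gadget of constant size.

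The main obstacle is the design of the gadget so that the coupling between the triple rotation and the zero-test requirement is tight: it must be impossible to absorb a nonzero $\vr u_i$ without destroying the triple invariant, and equally impossible to cheat on the triple rotation (e.g.\ by skipping the $-3\vr a$ step) without leaving residue somewhere that cannot be cleaned up. This is the technical heart of the construction and relies on carefully exploiting the fact that $\vr a$ is the only counter allowed to be nonzero at the end: any extra value left in $\vr b$, $\vr c$, $\vr t$, $\vr u_1$ or $\vr u_2$ must be compensated, and the only arithmetic degree of freedom the program has is the parameter $A$, which is frozen at the start.
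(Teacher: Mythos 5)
Your high-level architecture coincides with the paper's: six counters (the triple registers $\vr a, \vr b, \vr c$, the two simulated counters, one auxiliary $\vr t$), verbatim translation of the ordinary instructions, and a constant-size gadget per zero test that decrements $\vr b$ and ``quadruples'' so that the final requirement $\vr b = \vr c = 0$ certifies both the number of zero tests and their correctness. However, you explicitly defer what you yourself identify as the technical heart --- the mechanism forcing $\vr u_i = 0$ at each gadget invocation --- and the hints you give for it do not work. A zero-test-free program cannot ``require $\vr t = 0$ by structural dead-ends''; the only enforcement available is a global invariant whose violation is detected at the very end. More importantly, if the invariant relates only $\vr a, \vr b, \vr c$ (as your description of the gadget as $A' \mapsto 4A'$ suggests), then a cheating run can quadruple $\vr a$ purely by shuttling between $\vr a$, $\vr t$ and $\vr c$ while ignoring $\vr u_i$ entirely: nothing forces a nonzero residue of $\vr u_i$ to ``propagate to $\vr t$.''

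The missing idea is to place the simulated counters \emph{inside} the quadrupled quantity. The paper's invariant is $(\vr a + \vr u_1 + \vr u_2 + \vr t)\cdot 4^{\vr b} = \vr a + \vr u_1 + \vr u_2 + \vr t + \vr c$ (together with $\vr t = 0$), which forces a second modification absent from your proposal: every ordinary increment or decrement of a simulated counter must be compensated by an opposite update of $\vr a$, so that the sum is preserved outside the gadgets (this is also why $A$ must be chosen large enough in the backward direction). The gadget testing $\vr u_1$ consists of four flat loops, each transferring units of $\vr c$ into the sum while shuffling $\vr a \to \vr t \to \vr a$ and $\vr u_2 \to \vr u_1 \to \vr u_2$; each pair of loops at most doubles the sum, with equality only if the receiving counters ($\vr t$ and $\vr u_1$) start at zero and the loops are maximally iterated. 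Hence the sum is at most quadrupled, exactly quadrupled only when $\vr u_1 = 0$, and since every command preserves $\vr a + \vr u_1 + \vr u_2 + \vr t + \vr c$, a deficit $(\cdot)\cdot 4^{\vr b} < (\cdot) + \vr c$ once created can never be repaired and leaves $\vr c > 0$ at the end. Your telescoping argument for counting the $B$ invocations is fine once this invariant is in place, but without it neither the count nor the zero-test soundness is established. (Note also that the gadget does not literally quadruple $\vr a$: it quadruples the sum while restoring $\vr u_1$ and $\vr u_2$, so $\vr a$ becomes $4\vr a + 3(\vr u_1 + \vr u_2)$.)
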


\subparagraph*{Amplifiers.}
The key contribution of this paper 
consists of the construction of two families of programs that transform $B$-triples into $\F{d}(B)$-triples.
We formalise this type of programs through the notion of \emph{amplifiers}.
Let $F : \Npos \to \Npos$ be a monotone function satisfying $F(n)\geq n$ for every $n\in\Npos$.
Consider a program $\prog P$ using the set of counters $\vr X$,
out of which we distinguish six counters
$\vr a, \vr b, \vr c, \vr b', \vr c', \vr t\in \vr X$, 
and a subset of counters 
$\vr Z \subseteq \vr X\setminus \set{\vr a, \vr b, \vr c, \vr b', \vr t}$ that contains $\vr c'$.
The tuple $(\prog P,(\vr a, b, c),(\vr a, b', c'),t,Z)$
is called \emph{$F$-amplifier} if for all $A,B\in\Npos$:
\begin{itemize}
\item
$\prog P$ $\vr Z$-computes
$\triple {A\cdot 4^{B-F(B)}} {F(B)} {\vr a} {\vr b} {\vr c} {\vr X}$
from $\triple {A} {B} {\vr a} {\vr b'} {\vr c'} {\vr X}$
if $A$ is divisible by $4^{(F(B)-B)}$;
\item $\prog P$ $\vr Z$-computes
nothing
from $\triple {A} {B} {\vr a} {\vr b'} {\vr c'} {\vr X}$
if $A$ is not divisible by $4^{(F(B)-B)}$.
\end{itemize}
An amplifier transforms $B$-triples on its \emph{input counters} $\vr a, \vr b', \vr c'$ 
into $F(B)$-triples on its \emph{output counters} $\vr a, \vr b, \vr c$.
Remark that the counter $\vr a$ is involved in both input and output.
The counters in $\vr Z$, called \emph{end counters}, are intuitively speaking assumed
to be $0$-checked after the completion of a run of $\prog P$.
The auxiliary counter $\vr t$ does not play a direct role
apart from \emph{not} being an input counter, an output counter nor an end counter.
This will prove useful in our constructions.
We note that no condition is imposed on the runs that start
from a counter valuation that is not a triple on the input counters,
nor on the runs that are not $\vr Z$-zeroing.

In Section~\ref{sec:ampl} we construct of a family of $\F{d}$-amplifiers:

\begin{restatable}{lemma}{amp}\label{lem:amp}
For every $d \in\Npos$ there exists
an $\F d$-amplifier
of size $\O(d)$ that uses $2d+4$ counters out of which $d$ are end counters. 
\end{restatable}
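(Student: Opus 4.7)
The plan is to prove Lemma~\ref{lem:amp} by induction on $d$, constructing $\F{d+1}$-amplifiers by wrapping $\F d$-amplifiers in a controlled iteration loop that applies the lower-level amplifier the right number of times to realise the recursion $\F{d+1} = \successor{\F d}$.

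For the base case $d = 1$, I would construct an $\F 1$-amplifier directly using a single flat loop controlled by decrementing $\vr b'$. Each iteration performs a short sequence of transfers exploiting the identity $4(4^k-1) + 3 = 4^{k+1} - 1$: the content of $\vr c'$ is moved to $\vr c$ with a multiplicative factor of four, using $\vr t$ as a staging counter, while $3A$-worth of increments is injected from $\vr a$ using the divisibility slack. The end-counter discipline forces the loop to be maximally iterated, so that any deviation from the intended schedule leaves a nonzero end counter and the $\vr Z$-zeroing run is unique. This uses the six counters $\vr a, \vr b, \vr c, \vr b', \vr c', \vr t$ with $\vr c'$ being the lone end counter, matching $2\cdot1+4$ counters and $1$ end counter.

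For the inductive step, assuming $\prog A_d$ with the stated resources, I construct $\prog A_{d+1}$ in three phases: (i) an initialisation that consumes $\vr c'_{d+1}$ and uses a small portion of $\vr a$ to install a small base-value triple on the input counters $(\vr a, \vr b'_d, \vr c'_d)$ of $\prog A_d$; (ii) a main loop controlled by $\vr b'_{d+1}$ that, at each iteration, invokes $\prog A_d$, transfers the resulting output $(\vr b_d, \vr c_d)$ back into $(\vr b'_d, \vr c'_d)$ to reseed for the next call, and decrements $\vr b'_{d+1}$; (iii) a final invocation of $\prog A_d$ that leaves the output on $(\vr b_d, \vr c_d)$, which we designate as the output counters $(\vr b_{d+1}, \vr c_{d+1})$ of $\prog A_{d+1}$. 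The two fresh counters introduced at this level are $\vr b'_{d+1}$ (non-end) and $\vr c'_{d+1}$ (new end counter); all other counters, including $\vr t$ and the output triple, are inherited from level $d$. This yields exactly $2(d+1)+4$ counters with $d+1$ end counters, and the iteration count is chosen so that the total number of applications of $\F d$ matches the recursive definition of $\F{d+1}$. The end counters of $\prog A_d$ remain end counters of $\prog A_{d+1}$ because the last invocation zeroes them and no subsequent code touches them.

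The main obstacle is ensuring that the unique $\vr Z$-zeroing run of $\prog A_{d+1}$ exists exactly when $\vr a$ satisfies the overall divisibility condition $4^{\F{d+1}(B) - B} \mid A$. This aggregate condition must decompose neatly into the chain of sub-divisibility conditions imposed by the successive invocations of $\prog A_d$ on progressively amplified $\vr b$-values. I enforce this by keeping every surrounding loop flat and letting the end-counter discipline force maximal iteration: a short execution leaves $\vr b'_{d+1}$ or some end counter nonzero, and a violation of any intermediate sub-divisibility condition leaves an end counter of the offending $\prog A_d$-invocation nonzero by the inductive hypothesis. Uniqueness of the run for $\prog A_{d+1}$ then follows from the uniqueness guaranteed for each invocation of $\prog A_d$ together with the determinism of the intervening transfer and counting operations.
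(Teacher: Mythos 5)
Your overall architecture matches the paper's: a six-counter $\F 1$-amplifier as the base case, and an inductive lift that adds two counters per level (one of them a fresh end counter) and realises $\F {d+1} = \successor{{\,\F d\,}}$ by invoking the level-$d$ amplifier $B-1$ times inside a loop governed by the new input counters. The placement of the re-seeding transfer (after each inner call rather than before it) is immaterial.

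There is, however, a genuine gap in the soundness argument for the inductive step. You claim that a violation of an intermediate sub-divisibility condition "leaves an end counter of the offending $\prog A_d$-invocation nonzero by the inductive hypothesis," and that uniqueness follows from the uniqueness of each invocation plus the "determinism of the intervening transfer operations." Neither follows from the induction hypothesis as you state it. The amplifier definition constrains only runs that start from a genuine triple on the input counters and are $\vr Z$-zeroing; it imposes \emph{no condition} on what $\prog A_d$ may do when called from any other configuration. A cheating run of $\prog A_{d+1}$ produces exactly such configurations: the transfer loops are flat and nondeterministic (not deterministic), so under-iterating one leaves a residue on an output counter of $\prog A_d$ --- which is not an end counter --- and the next invocation of $\prog A_d$ then starts from a non-triple valuation about which the induction hypothesis is silent; it could in principle zero all its end counters anyway. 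Similarly, a run that exits the outer loop early leaves $\vr b'_{d+1}$ nonzero, and $\vr b'_{d+1}$ is not an end counter either. The paper closes this gap by strengthening the induction hypothesis to \emph{strong} amplifiers --- every run, well-behaved or not, preserves the sum $\vr a + \vr c + \vr t + \suma{\vr Z}$ and preserves the inequality $(\vr a + \vr c + \vr t + \suma{\vr Z} - \vr c')\cdot 4^{\vr b'} < \vr a + \vr c + \vr t + \suma{\vr Z}$ once it holds --- and by maintaining an explicit arithmetic invariant at level $d+1$ whose violation provably persists to the end of the run and forces $\vr{c''}>0$ there. Relatedly, your sketch never explains how the budget $\vr c'_{d+1} = A\cdot(4^B-1)$ is drained: it must be consumed across the iterations, in lockstep with a quadrupling of the current triple before each inner call (this quadrupling is also what keeps the chained divisibility conditions satisfiable), and this draining is precisely what the invariant tracks. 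Without this machinery, neither the uniqueness clause nor the "computes nothing when divisibility fails" clause of the amplifier definition is established.
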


Furthermore, in Section~\ref{sec:pampl} we extend the notion of amplifiers
to programs with stack, and show how
%%%using
a stack
%%% in an efficient manner
can replace three quarters of the counters used in the previous construction:

\begin{restatable}{lemma}{amppushdown}\label{lem:amppushdown}
For every $d\in\Npos$ there exists an $\F d$-amplifier
of size $\O(d)$ that uses a stack and $\lfloor \frac d 2 \rfloor + 4$ counters out of which $\lfloor \frac d 2 \rfloor$ are end counters.
\end{restatable}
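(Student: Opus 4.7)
The plan is to adapt the construction of Lemma~\ref{lem:amp} by replacing roughly half of its counters with a single pushdown stack. Recall that Lemma~\ref{lem:amp}'s $\F d$-amplifier is built recursively, level by level, through $d$ sub-amplifiers, each of which raises the $b$-component of the intermediate triple one step higher in the hierarchy. Each such sub-amplifier needs its own pipeline counter to transmit its output $b$-component to the next level, and at least one end counter to certify the correctness of its $4^B$-scaling step; together these account for the $2d$ level-specific counters of Lemma~\ref{lem:amp}, while the four base counters $\vr a, \vr b, \vr c, \vr t$ carry the overall input and output triple.

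The stack enables two simultaneous savings. First, the pipeline values $B_1, B_2, \ldots, B_d$ produced by the successive sub-amplifiers are consumed in a strictly LIFO manner: the level-$i$ sub-amplifier iterates the level-$(i-1)$ sub-amplifier inside its own loop body, so control returns from the inner level entirely before the outer iteration continues. I would therefore introduce disjoint stack alphabets $\mathbf{S}_1, \ldots, \mathbf{S}_d$ and, instead of incrementing or decrementing a pipeline counter $\vr{b}_i$, push or pop a symbol from $\mathbf{S}_i$. The strict top-symbol matching enforced by the \pop s command prevents any inter-level interference, so the semantics of each sub-amplifier is preserved, and all $d$ pipeline counters are eliminated.

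Second, I would pair consecutive levels and have them share a single end counter. The risk of such sharing is that a cheat at level $i$ could be silently compensated by an opposite cheat at level $i+1$; I would prevent this by tagging the increments and decrements that contribute to the shared counter by the symbol currently on top of the stack. Operations attributable to level $i$ are guarded by the presence of a symbol of $\mathbf{S}_i$ on top, and symmetrically for level $i+1$ with $\mathbf{S}_{i+1}$; since these two stack regimes do not overlap in time, no cross-level compensation can occur, and the shared counter returns to zero if and only if both level invariants held. This halves the end-counter budget from $d$ to $\lfloor d / 2 \rfloor$. Added to the four base counters, the total is $\lfloor d / 2 \rfloor + 4$, as required, and since each level still contributes a constant-size program fragment, the overall size remains $\O(d)$.

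The main obstacle will be verifying that these stack-based replacements preserve the exact amplifier semantics: a $\vr Z$-zeroing run must exist and produce the triple $\triple{A \cdot 4^{B - \F d(B)}}{\F d(B)}{\vr a}{\vr b}{\vr c}{\vr X}$ precisely when $A$ is divisible by $4^{\F d(B) - B}$, and no $\vr Z$-zeroing run must exist otherwise. The delicate part is the argument for shared end counters; the stack-tagging mechanism described above is what closes the potential loophole, and the rest of the analysis proceeds level by level as in Lemma~\ref{lem:amp}, with the stack-based pipeline playing the role of the eliminated counters.
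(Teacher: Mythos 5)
Your first saving --- delegating the level-by-level $\vr b'$-type input counters to the stack --- is essentially what the paper does, and your observation that the nesting of the sub-amplifiers makes these values available in LIFO order is the right justification for it. The gap is in your second saving. Consecutive levels of the construction are \emph{nested}, not sequential: the level-$i$ program runs entirely inside the iteration step of level $i+1$, and during that entire inner call the level-$(i+1)$ budget counter $\vr c''$ holds a huge non-zero value (it is the remaining budget $A\cdot(4^{\vr b''}-1)$-style quantity that will only be exhausted at the very end of the outer run) while the level-$i$ counter $\vr c'$ is simultaneously non-zero. So the premise that ``these two stack regimes do not overlap in time'' is false, and a shared end counter would hold the \emph{sum} of the two budgets. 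No amount of tagging increments by the top-of-stack symbol separates the two contributions: the correctness of each level rests on the exact arithmetic invariant $(\vr a + \vr c + \vr t)\cdot 4^{\vr b'} = \vr a + \vr c + \vr t + \vr c'$, which cannot even be stated, let alone maintained, if $\vr c'$ and $\vr c''$ are conflated. (As a side remark, the model also has no top-symbol guard --- only \push{s} and \pop{s} --- though that could be simulated; and your count gives $\lceil d/2\rceil + 4$ counters for odd $d$, one too many.)

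What the paper does instead is delegate \emph{half of the $\vr c$-type input counters themselves} to the stack, at alternating levels. This creates a new difficulty that your LIFO argument for the $\vr b$-symbols does not cover: when both $\vr b'$ and $\vr c'$ of a level are stack symbols, the inner program consumes them in a specific \emph{interleaving}, so the outer level must be able to push an arbitrary shuffle of $\vr b'$- and $\vr c'$-symbols. The paper handles this with a dedicated shuffle-pushing gadget (the nested loops of $\overline{{\,\prog Q\,}}$), whose pushes of $\vr c'$ must be metered against a \emph{real} counter $\vr c''$ at the next level up --- which is exactly why only every other $\vr c$-counter can be delegated, and why the remaining $\lfloor d/2\rfloor$ of them survive as genuine end counters. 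Your construction as stated has no mechanism for producing these interleavings and no sound way to halve the end counters, so the core of the pushdown-specific argument is missing.
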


\subparagraph*{Proof of the main theorems.}
While the proofs of our key lemmas are delegated to the appropriate sections,
we can already show how these lemmas yield a reduction from Problem~\ref{prob:zeroreachability_hard}
to Problems~\ref{prob:reachability_hard} and~\ref{prob:Creachability_hard}.
Let us consider an instance of Problem~\ref{prob:zeroreachability_hard}, that is,
a 2-counter program with zero tests $\prog P$ and an integer $n \in \Npos$.
We transform this instance into an instance $\prog P''$ of Problem~\ref{prob:reachability_hard}
and an instance $\prog Q''$ of Problem~\ref{prob:Creachability_hard}.
These two programs rely on the program $\prog P'$ given by Lemma~\ref{lem:zero},
and the $\F d$-amplifiers $\prog P_d$ and $\prog Q_d$ given by Lemmas~\ref{lem:amp} and~\ref{lem:amppushdown}.\footnote{Note that,
as explained in Section~\ref{sec:pampl},
when $d$ is odd calling the program $\prog Q_d$ requires non-deterministically pushing 
the content of $\vr c'$ onto the stack along with the content of $\vr b'$, which we omit here.}

\begin{minipage}[t]{0.45\textwidth}
\strut\vspace*{-\baselineskip}

\PROG{0.9}{Program $\prog P''$}{prog:P''}{
\State{\add{{\vr b'}}{n}}\label{l:triple1}
\LLoop{\inc{\vr a} \quad \add{{\vr c'}}{4^n-1}}\label{l:triple2}
\State $\prog P_d$
\State $\prog P'$
\LLoop{\dec{\vr a}}\label{l:decA}
}
\end{minipage}\begin{minipage}[t]{0.45\textwidth}
\strut\vspace*{-\baselineskip}

\PROG{0.9}{Program $\prog Q''$}{prog:Q''}{
\State{\push{\vr (b')^{n}}}
\LLoop{\inc{\vr a} \quad \add{{\vr c'}}{4^n-1}}
\State $\prog Q_d$
\State $\prog P'$
\LLoop{\dec{\vr a}}
}
\end{minipage}

\medskip

Due to size estimations of Lemmas~\ref{lem:zero}, \ref{lem:amp} and~\ref{lem:amppushdown},
the sizes of programs $\prog P''$ and $\prog Q''$ are linear in the size of $P$ plus $n+d$.
We now prove that this is a valid reduction from Problem~\ref{prob:zeroreachability_hard}
to Problem~\ref{prob:reachability_hard}.
The proof for Problem~\ref{prob:Creachability_hard} is analogous
since the programs $\prog P_d$ and $\prog Q_d$ have identical effect on triples. 

We need to show that $\prog P''$ has a complete run that starts and ends with all counters equal to $0$ if and only if
$\prog P$ has a complete run that starts and ends with all counters equal to $0$ and that does exactly $\F d(n)$ zero tests.
In order to prove it, let us consider the structure of a hypothetical run $\pi$ of $\prog P''$
that starts and ends with all counters having value zero.
Starting from the configuration where all the counters are zero, 
Lines~\ref{l:triple1}--\ref{l:triple2} generate an arbitrary $n$-triple:
Progressing through line~\ref{l:triple1} and performing $A \in \Npos$ iterations of line~\ref{l:triple2}
results in the configuration $\triple {A} {n} {\vr a} {\vr b'} {\vr c'} {\vr X}$.
Next, it is important to note that the subrun of $\pi$ involving $\prog P_d$ zeroes all the end counters,
since these counters remain unchanged after the invocation of $\prog P_d$
and they have value zero at the end of $\pi$.
Consequently, according to the definition of an amplifier, $\prog P_d$ transforms the $n$-triple $\triple {A} {n} {\vr a} {\vr b'} {\vr c'} {\vr X}$
into an $\F d(n)$-triple $\triple {A'} {\F d (n)} {\vr a} {\vr b} {\vr c} {\vr X}$.
From there, the subrun involving $\prog P'$ must end in a configuration
where every counter except $\vr a$ has value zero since the final line of $\prog P''$ can only decrement $\vr a$.
Therefore, the run $\pi$ exists if and only if there exists a run of $\prog P'$ that bridges the gap,
starting from an $\F d(n)$-triple $\triple {A'} {\F d (n)} {\vr a} {\vr b} {\vr c} {\vr X}$
and ending in a configuration where all the counters except $\vr a$ are $0$.
By Lemma~\ref{lem:zero} we know that such a run of $\prog P'$ exists if and only if
$\prog P$ has a complete run that starts and ends with all counters equal to $0$
and does exactly $\F d(n)$ zero tests,
which shows that our reduction is valid.

To conclude, let us remark that, as defined here, the program $\prog P''$ uses $2d + 7$ counters:
the call to $\prog P_d$ requires $2d + 4$ counters,
and while $\prog P'$ shares the output counters $\vr a,b$ and $\vr c$ of $\prog P_d$,
it uses three more counters.
We now argue that four counters can be saved, so that our program matches the definition of Problem~\ref{prob:reachability_hard}.
\begin{itemize}
\item The value of the input counter $\vr b'$ is never incremented in the program $\prog P_d$ we construct.\footnote{Remark that this is  \emph{not}
stated explicitly by Lemma~\ref{lem:amp}, but it is a trivial property of the corresponding construction presented in Section~\ref{sec:ampl}.}
Therefore, since in $\prog P''$ the call to $\prog P_d$ \emph{always} starts with the value ${\vr b'} = n$,
we can get rid of the counter $\vr b'$ by replacing the call to $\prog P_d$ with
$n$ consecutive copies of $\prog P_d$ in which each instruction decrementing $\vr b'$ is replaced with a jump to the next copy.
\item 
The second optimisation consists in reusing the counters of $\prog P_d$.
Since $\prog P_d$ is an amplifier, at the term of the run it is sufficient to check that the end counters are $0$ to ensure that \emph{all}
the counters except the output counters $\vr a$ and $\vr c$ are $0$.
Therefore, while the call to $\prog P'$ in $\prog P''$ needs to keep the $d$ end counters of $\prog P_d$ untouched,
there are still $d$ freely reusable counters (not counting $\vr a,b$ and $\vr c$ that are already reused),
and we can pick any three of these to use in $\prog P'$ instead of adding fresh ones.
\end{itemize}
For the program $\prog Q''$ it is not possible to save counters is that way,
but one of the extra counters of $\prog P'$ can be loaded on the stack,
which results in a program with $\lfloor \frac d 2 \rfloor + 6$ counters.

\newcommand{\zeroX}{$\prog Zero(\vr x)$}
\newcommand{\zeroY}{$\prog Zero(\vr y)$}
\newcommand{\zeroZ}{$\prog Zero(\vr z)$}

\section{Triples as a replacement for zero tests}\label{sec:zero}
The goal of this section is to prove Lemma~\ref{lem:zero}.
\medskip
\\
The six counters of program $\prog P'$ will consists in two counters $\vr x , y$ simulating the two counters of $\prog P$,
three counters $\vr a,b,c$ containing the initial triple,
and an auxiliary counter $\vr t$.
The idea behind the construction is that we will replace the zero tests with the two gadgets  \zeroX{} and \zeroY{}
defined as follows:

\begin{minipage}[t]{0.4\textwidth}
\strut\vspace*{-\baselineskip}

\PROG{0.9}{Program \zeroX{}}{prog:zero-testX}{
\LLoop{$\vr a \tran \vr t \quad \vr c \tran \vr t$}\label{l:doubleA1}
\LLoop{$\vr y \tran \vr x \quad \vr c \tran \vr t$}\label{l:doubleY1}
\LLoop{$\vr t \tran \vr a \quad \vr c \tran \vr a$}\label{l:doubleA2}
\LLoop{$\vr x \tran \vr y \quad \vr c \tran \vr a$}\label{l:doubleY2}
\State{\dec {\vr b}}
}
\end{minipage}\begin{minipage}[t]{0.4\textwidth}
\strut\vspace*{-\baselineskip}

\PROG{0.9}{Program \zeroY{}}{prog:zero-testY}{
\LLoop{$\vr a \tran \vr t \quad \vr c \tran \vr t$}
\LLoop{$\vr x \tran \vr y \quad \vr c \tran \vr t$}
\LLoop{$\vr t \tran \vr a \quad \vr c \tran \vr a$}
\LLoop{$\vr y \tran \vr x \quad \vr c \tran \vr a$}
\State{\dec {\vr b}}
}
\end{minipage}
\medskip
\\
The functioning of these two programs revolves around
the following invariant:
\begin{align}
&\textsc{Invariant:} 
& &(\vr a +  \vr x + \vr y  + t) \cdot 4^{\vr b} = \vr a +  \vr x + \vr y  + \vr t + \vr c  \ \textup{ and } \
\vr t = 0;\notag\\
&\textsc{Broken invariant:}
& &(\vr a +  \vr x + \vr y  + t) \cdot 4^{\vr b} < \vr a +  \vr x + \vr y  + \vr t + \vr c.\notag
\end{align}
Remark that the broken invariant is more specific
than the negation of the invariant.
We now present a technical lemma showing that \zeroX{} and \zeroY{}
accurately replace zero tests.

\begin{lemma}\label{claim:gadget}
Let $\vr z \in \{ \vr x,y\}$.
From each configuration of \zeroZ{} where $\vr b > 0$, $\vr z = 0$ and the invariant holds
there is a unique complete run that maintains the invariant and preserves the values of $\vr x$ and $\vr y$.
All other runs starting from this configuration,
as well as all runs starting with a broken invariant
or a holding invariant with a value of $\vr z$ greater than $0$,
end with a broken invariant.
\end{lemma}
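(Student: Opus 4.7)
The plan is to track a single potential
\[
P \,=\, (\vr a + \vr x + \vr y + \vr t)\cdot 4^{\vr b} \,-\, \vr c \,-\, \vr a \,-\, \vr x \,-\, \vr y \,-\, \vr t,
\]
so that the invariant becomes $P = 0 \wedge \vr t = 0$ and the broken invariant becomes $P < 0$. A direct inspection of each of the four loop bodies of \zeroX{} shows that every iteration, regardless of which loop it belongs to, decrements $\vr c$ by one and increments $\vr a + \vr x + \vr y + \vr t$ by one. Hence, writing $k_i$ for the number of iterations of the $i$-th loop, $K = k_1 + k_2 + k_3 + k_4$, $S_0 = \vr a_0 + \vr x_0 + \vr y_0 + \vr t_0$ for the initial value of $\vr a + \vr x + \vr y + \vr t$, and $\vr b_0$ for the initial value of $\vr b$, after the four loops $P$ has grown by $K\cdot 4^{\vr b_0}$, and the concluding \dec{\vr b} then subtracts $3(S_0 + K)\cdot 4^{\vr b_0 - 1}$ from $P$. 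This yields
\[
P_f \,=\, P_0 + 4^{\vr b_0 - 1}(K - 3 S_0).
\]

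The second step is to bound $K$ from above using the fact that the run is complete, so no counter is decremented below zero: this gives $k_1 \leq \vr a_0$, $k_2 \leq \vr y_0$, $k_3 \leq \vr t_0 + 2k_1 + k_2$ and $k_4 \leq \vr x_0 + k_2$. The counter $\vr c$ is never the binding constraint because in all three hypotheses of the lemma one has $\vr c_0 \geq S_0(4^{\vr b_0} - 1) \geq 3 S_0$ (using $\vr b_0 \geq 1$), which exceeds any value of $K$ derived below. Summing the four inequalities yields
\[
K \,\leq\, 3 S_0 - 2\vr x_0 - 2\vr t_0,
\]
with equality if and only if each of the four bounds is tight, corresponding to a unique ``maximal-iteration'' run.

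The three cases of the lemma now follow. If the invariant holds initially and $\vr z = \vr x = 0$, then $P_0 = 0$, $\vr t_0 = 0$, $\vr x_0 = 0$, so $K \leq 3 S_0$ with equality iff the maximal-iteration run is taken; this gives $P_f \leq 0$, with equality only for this distinguished run. For that run, a direct computation shows $\vr t_f = \vr t_0 + 2k_1 + k_2 - k_3 = 0$, $\vr x_f = \vr x_0 + k_2 - k_4 = 0$ and $\vr y_f = \vr y_0 - k_2 + k_4 = \vr y_0$, so it maintains the invariant and preserves $\vr x, \vr y$; every other complete run has $K < 3 S_0$ and thus $P_f < 0$. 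If the invariant holds but $\vr x_0 > 0$, the bound tightens to $K \leq 3 S_0 - 2\vr x_0 < 3 S_0$, giving $P_f < 0$. If the invariant is broken initially, $P_0 < 0$ and $K - 3 S_0 \leq 0$, so $P_f \leq P_0 < 0$. The case of \zeroY{} is entirely symmetric under the swap $\vr x \leftrightarrow \vr y$.

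The main difficulty lies in identifying the potential $P$ and recognising that all four loops, despite their asymmetric structure, have the uniform effect of increasing $P$ by $4^{\vr b_0}$ per iteration; this is precisely the design principle behind the gadgets. Once $P$ is in hand, the bound $K \leq 3 S_0 - 2\vr x_0 - 2\vr t_0$ telescopes cleanly and the three cases drop out.
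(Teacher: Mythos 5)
Your proof is correct and follows essentially the same route as the paper's: both rest on the observation that every loop iteration preserves $\vr a + \vr x + \vr y + \vr t + \vr c$ while adding one to $\vr a + \vr x + \vr y + \vr t$, so the concluding decrement of $\vr b$ forces this latter sum to be exactly quadrupled, which happens only along the unique maximally-iterated run starting with $\vr x = \vr t = 0$. The only difference is bookkeeping: the paper obtains the at-most-quadrupling bound in two doubling stages (lines 1--2, then 3--4), whereas you sum the four iteration-count constraints at once into the closed form $K \le 3S_0 - 2\vr x_0 - 2\vr t_0$, which packages the same case analysis slightly more quantitatively.
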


\begin{proof}
We show the result for \zeroX{}: the proof can easily be transferred to \zeroY{} by
exchanging the roles of $\vr x$ and $\vr y$.
Let us start by observing that  \zeroX{} globally  decrements the value of $\vr b$ by $1$
and preserves the sum $\vr a +  \vr x + \vr y + t + c$ since every line preserves it.
Therefore, in order to maintain a holding invariant, \zeroX{}
needs to quadruple the value of the sum of counters $\vr a + x + y + t$.
Similarly, to repair a broken invariant \zeroX{}
needs to increase the value of $\vr a +  \vr x + \vr y  + t$ by more than quadrupling it.
We now study \zeroX{} in detail to show that the latter is impossible,
and that the former only occurs under specific conditions that imply the statement of the lemma.
We split our analysis in two:
\begin{description}
\item[Lines \ref{l:doubleA1}--\ref{l:doubleY1}:]
The loop on line~\ref{l:doubleA1}, respectively \ref{l:doubleY1},
increases $\vr a + x + y + t$ by at most $\vr a$, respectively~$\vr y$.
Therefore the value of $\vr a + x + y + t$ is at most doubled,
which occurs only if initially $\vr t = \vr x = 0$
and if then both loops are maximally iterated, resulting in $\vr a = \vr y = 0$.
\item[Lines \ref{l:doubleA2}--\ref{l:doubleY2}:]
The loop on line~\ref{l:doubleA2}, respectively \ref{l:doubleY2},
increases $\vr a + x + y + t$ by at most $\vr t$, respectively~$\vr x$.
Therefore the value of $\vr a + x + y + t$ is at most doubled,
which happens only if
$\vr a = \vr y = 0$ upon reaching line~\ref{l:doubleA2}
and if then both loops are maximally iterated, resulting in $\vr t = \vr x = 0$.
\end{description}
Combining both parts, we get that the value 
of $\vr a +  \vr x + \vr y  + t$ is at most quadrupled
through a run of \zeroX{},
which only occurs if initially $\vr t = \vr x = 0$ and all the loops are maximally iterated.
This immediately implies that \zeroX{} cannot repair a broken invariant.
Moreover, to maintain a holding invariant
it is necessary to actually quadruple this value,
therefore $\vr x$ needs to be $0$ at the start;
the maximal iteration of the loops will cause $\vr t$ to be $0$ at the end;
and the content of $\vr y$ will be completely moved to $\vr x$
by line~\ref{l:doubleY1}
and then back to $\vr y$ by line~\ref{l:doubleY2},
remaining unchanged as required by the statement.
\end{proof}

We now use \zeroX{} and \zeroY{} to transform
$\prog P$ into a program $\prog P'$ satisfying Lemma~\ref{lem:zero}.
Formally, the program $\prog P'$ is obtained
by applying the following modifications
to $\prog P$:
\begin{itemize}
\item We add an increment (resp. decrement) of $\vr a$
to each line of $\prog P$ featuring a decrement (resp. increment) of $\vr x$ or $\vr y$
so that every line preserves the value of $\vr a +  x  + y + t$;
\item We replace each zero tests ``\testz{\vr{x}}''
with a copy of the program \zeroX{},
and each zero test ``\testz{\vr{y}}''
with a copy of the program \zeroY{};
\end{itemize}
The first modification ensures that
all the lines of $\prog P'$
except the calls to \zeroX{} and \zeroY{}
maintain the invariant.
We observe that for every complete run $\pi'$ of $\prog P'$
that starts in a triple $\triple A {B} {\vr a} {\vr b} {\vr c} {\vr X}$ and ends in a configuration
where all counters except $\vr a$ are zero,
the invariant is satisfied both at the beginning
(by the definition of a triple) and at the end.
Therefore, according to Lemma~\ref{claim:gadget}
the invariant is never broken throughout $\pi'$,
indicating that the calls to \zeroX{} and \zeroY{} accurately simulate zero tests.
Consequently, $\pi'$ can be transformed into a matching run $\pi$ of $\prog P$
with same values of $\vr x$ and $\vr y$.
In particular, $\pi$ starts and ends with both counters equal to $0$.
Additionally, the value of $\vr b$ goes from $B$ to $0$ along $\pi'$.
Since this value is decremented by one by each call to \zeroX{} or \zeroY{},
$\pi'$ goes through exactly $B$ such calls,
which translates into $\pi$ performing exactly $B$ zero tests.

To conclude the proof of Lemma~\ref{lem:zero},
remark that we can also transform every run of $\prog P$ that starts and ends with both counters equal to $0$ and performs $B$ zero tests
into a matching run of $\prog P'$ starting from some triple $\triple A {B} {\vr a} {\vr b} {\vr c} {\vr X}$ and ending in a configuration
where all counters except $\vr a$ are zero
However, we must be cautious in choosing the initial value $A$ of $\vr a$ to be sufficiently high.
This ensures that we can increment $\vr x$ and $\vr y$ as high as required, despite the matching decrements of $\vr a$ added in $\prog P'$.

The size of $\prog P'$ is linear in the size of $\prog P$, as required. {}

\section{Amplifiers defined by counter programs}\label{sec:ampl}

%%% The goal of this section is to prove Lemma~\ref{lem:amp}.
This section is devoted to an inductive proof of Lemma~\ref{lem:amp}.
\medskip
\\
%%% The rest of this section is devoted to an inductive proof of Lemma~\ref{lem:amp}.
First we build  an $\F 1$-amplifier $\prog P_1$
with $6$ counters out of which one is an end counter (Lemma~\ref{lem:P1}).
Next we show how to
lift an arbitrary $F$-amplifier with $d$ counters
into an $\successor{{\,F\,}}$-amplifier by adding two counters
out of which one is an end counter (Lemma~\ref{lem:P}).
Applying $d-1$ times our lifting process to the program $\prog P_1$ 
yields an $\F d$-amplifier using $2d+4$ counters, proving Lemma~\ref{lem:amp}.

\subparagraph*{Strong amplifiers.}
For the purpose of induction step, namely for lifting $F$-amplifiers to $\successor{{\,F\,}}$-amplifiers,
we need a slight strengthening of the notion of amplifier.
An $F$-amplifier
$(\prog P,(\vr a, b, b),(\vr a, b', c'),t,Z)$
is called \emph{strong} if every run $\pi$ of $\prog P$ satisfies the following conditions
(let $\suma{\vr Z}$ stand for the sum of all counters in $\vr Z$):
\begin{enumerate}
\item The value of the sum  $\vr a + \vr c + \vr t + \suma{\vr Z}$ is the same at the start and at the end of $\pi$;
\item If $(\vr a + \vr c + \vr t + \suma{\vr Z} - \vr c') \cdot 4^{\vr b'} < \vr a + \vr c + \vr t + \suma{\vr Z}$
holds at the start of $\pi$, it also holds at the end.
\end{enumerate}

\subsection{Construction of the \texorpdfstring{\boldmath $\F 1$}{F\_\{\#1\} 1}-amplifier \texorpdfstring{\boldmath $\prog P_1$}{\#1 P\_1}}
\label{subsec:F1}

Consider the following program with 6 counters $\vr X = \set{\vr a, \vr b, \vr c, \vr b', \vr c', \vr t}$:

\PROG{0.9}{Program $\prog P_1$}{prog:P1}{
\LLoop{$\vr a \tran t$\label{l1:ini1}}
\label{l1:startini} 
\LLoop{$\vr t \tran a$  \quad \sub{\vr{c'}} 3 \quad
\add{\vr c} {3}\label{l1:dec1}}
\State \sub{\vr{b'}} 1 \quad \add{\vr b} {1}
\label{l1:endini} 
\Loop 
\label{l1:startit} 
\LLoop{$\vr a \tran t$ \quad
\sub{\vr{c'}}{1} \quad
\add{\vr t}{1}}\label{l1:main1}
\LLoop{$\vr c \tran a$  \quad \sub{\vr c'} 1 \quad \add{\vr{a}} 1} \label{l1:dec2}\label{l1:main2}
\LLoop{$\vr a \tran c$} \quad  \sub{\vr c'} 1 \quad \add{\vr c} 1\label{l1:main3}
\LLoop{$\vr t \overset{8}{\tran} c$ \quad \sub{\vr c'} 8 \quad \add{\vr a} 1 \quad \add{\vr c} 7  \label{l1:dec3}}\label{l1:main4}
\State \sub{\vr b'} 1 \quad \add{\vr b} {2} 
\EndLoop
\label{l1:endit} 
}
\pagebreak
\\
The program consists of an \emph{initialisation} step lines \ref{l1:startini}--\ref{l1:endini},
and an \emph{iteration} step lines \ref{l1:startit}--\ref{l1:endit}.
The initialisation step and the iteration step
both decrement $\vr b'$ by one and preserve the right-hand side $\vr a + c + t + c'$
of the invariant as every line of $\prog P_1$ preserves this sum.
Hence, to maintain the invariant the sum $\vr a + c + t$ needs to be quadrupled, 
and to repair a broken invariant
the sum $\vr a + c + t$ needs to be increased by an even larger amount.
We show that in both steps the latter is impossible,
and the former only happens if all loops are maximally iterated,
which implies the modification of the counter $\vr a$ required by the statements.

\begin{lemma} \label{lem:P1}
The program
$(\prog P_1,\vr X,(a,b,c),(a,b',c'),\vr t,\{c'\})$
is a strong $\F 1$-amplifier.
\end{lemma}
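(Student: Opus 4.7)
\smallskip

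The plan is to analyse $\prog P_1$ through the single conserved quantity $S := \vr a + \vr c + \vr t + \vr c'$ and the "active mass" $M := \vr a + \vr c + \vr t$, and show that the program is a sequence of $B$ blocks (one initialisation plus $B-1$ iterations) each of which decrements $\vr b'$ by $1$ and at most quadruples $M$, with a quadrupling forcing a rigid divisibility pattern.

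First I would check by direct inspection that every command line preserves $S$: each of the six relevant lines moves the same total amount into $\vr c$ as it removes from $\vr a + \vr t + \vr c'$. Since $\suma{\vr Z} = \vr c'$, this immediately gives item~(1) of the strong amplifier definition. Starting from the input triple $\triple A B {\vr a}{\vr b'}{\vr c'}{\vr X}$ one has $S = A \cdot 4^B$ and $M = A$, while in the target output triple $\triple{A/4^{B-1}}{2B-1}{\vr a}{\vr b}{\vr c}{\vr X}$ one has $M = A \cdot 4^B$ and $\vr c' = 0$; thus a $\{\vr c'\}$-zeroing run must multiply $M$ by exactly $4^B$ in total.

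Next I would analyse the two blocks. For the initialisation, parametrising the first loop by $i \le A$ and the second by $j \le i$ shows the net change in $M$ equals $3j \le 3A$, with equality only when $i = j = A$, which yields $\vr a = A$, $\vr c = 3A$, $\vr t = 0$, $\vr c'$ decreased by $3A$. For the iteration block, starting from $\vr a = \alpha$, $\vr c = \gamma$, $\vr t = 0$ and parametrising the four loops by $i,j,k,l$ with constraints $i\le\alpha$, $j\le\gamma$, $k\le\alpha-i+2j$ and $8l \le 2i$, I would compute that the net change in $M$ equals $i+j+k+8l$. The key arithmetic observation is that $8l \le 2i$, so the maximum is $\alpha + 3\gamma + 8\lfloor\alpha/4\rfloor \le 3(\alpha+\gamma)$, with equality iff $i=\alpha$, $j=\gamma$, $k=2\gamma$ and $l=\alpha/4$ — which forces $4\mid\alpha$ and all four inner loops to be maximally iterated. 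At equality the new state is $\vr a = \alpha/4$, $\vr c = 4\gamma + 15\alpha/4$, $\vr t = 0$, and $\vr c'$ is decreased by $3(\alpha+\gamma)$.

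Finally I would stitch these observations together. Each of the $B$ blocks multiplies $M$ by at most $4$, and because $S$ is preserved, a $\{\vr c'\}$-zeroing run (which must end with $M = S = A \cdot 4^B$) forces every block to quadruple $M$ exactly. A straightforward induction along the chain $\vr a \mapsto \vr a/4$ then shows that this is possible iff $4^{B-1} \mid A$, in which case the run is uniquely determined (all inner loops maximally iterated), the final values $\vr b = 1 + 2(B-1) = 2B-1$ and $\vr a = A/4^{B-1}$, $\vr c = (A/4^{B-1})(4^{2B-1} - 1)$ coincide with $\triple{A\cdot 4^{B-F(B)}}{F(B)}{\vr a}{\vr b}{\vr c}{\vr X}$, and when $4^{B-1}\nmid A$ nothing is $\{\vr c'\}$-computed. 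For the strong property~(2), if the broken invariant $M \cdot 4^{\vr b'} < S$ holds initially, then after any block the new $M' \le 4M$ and new $\vr b' = \vr b' - 1$ give $M' \cdot 4^{\vr b' - 1} \le M \cdot 4^{\vr b'} < S$, and $S$ is unchanged, so the broken invariant persists through the whole run. The main obstacle is the bookkeeping in the four-loop iteration analysis — in particular noticing that the $8$-to-$1$ transfer in the fourth loop is exactly tight against the doubled $\vr t$ produced by the first loop, which is what makes $4$ the sharp growth constant and pins down the divisibility condition.
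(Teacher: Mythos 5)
Your proof is correct and follows essentially the same strategy as the paper's: a conserved sum, an at-most-quadrupling bound on $\vr a+\vr c+\vr t$ per block with equality forcing maximal iteration of all loops and divisibility of $\vr a$ by $4$, and the $\{\vr c'\}$-zeroing condition forcing exact quadrupling in every block (the paper merely packages this as two halves of each block that each at most double the sum). The one point to tighten is that your iteration-block parametrisation assumes $\vr t=0$ on entry, whereas the at-most-quadrupling bound must hold for arbitrary runs (it is needed for uniqueness and for strong property~(2)); redoing the count with $\vr t=\tau$ arbitrary gives a net increase of at most $3(\vr a+\vr c)+\tau$, hence $M'\le 4M-2\tau\le 4M$, so the bound survives and exact quadrupling additionally forces $\tau=0$.
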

As an $\F 1$-amplifier, $\prog P_1$ is expected to map
each input  $\triple {A} {B} {\vr a} {\vr b'} {\vr c'} {\vr X}$ such that $4^{\F 1(B)-B}$ divides $A$
to the output
$\triple {A \cdot 4^{B-\F 1(B)}} {\F 1(B)} {\vr a} {\vr b} {\vr c} {\vr X}$.
Since $\F 1(B) = 2 B - 1$,
transforming the initial value ${\vr b'} = B$
into the final value ${\vr b} = \F 1(B)$ is easy:
$\prog P_1$ first decrements $\vr b$ by $1$
and increments $\vr b$ by $1$ once (line~\ref{l1:endini}),
and then increments $\vr b$ by $2$ whenever it decrements $\vr b'$ by $1$ (line~\ref{l1:endit}).
It is more complicated to transform the initial value ${\vr a} = A$
into the final value ${\vr a} = A \cdot 4^{B-\F 1(B)}$:
we need to divide $\F 1(B) - B = B-1$ times the content of $\vr a$ by $4$.
We prove that $\prog P_1$ does so by studying the following invariant:
\begin{align}
&\textsc{Invariant:} 
& &(\vr a + c + t) \cdot 4^{\vr b'} = \vr a + c + t + c' \ 
\textup{ and } \
\vr t = 0;\notag\\
&\textsc{Broken invariant:}
& &(\vr a + c + t) \cdot 4^{\vr b'} < \vr a + c + t + c'.\notag
\end{align}
Notice that saying that the invariant
is broken is more specific than saying that the invariant does not hold.
We now present two technical lemmas
describing how the invariant evolves along both steps of $\prog P_1$.
\begin{restatable}{lemma}{Pfirst}\label{claim:P1}
From each configuration where $\vr b'>0$, $\vr c = 0$ and the invariant holds,
there is a unique run through 
the initialisation step that maintains
the invariant and preserves the value of $\vr a$.
All the other runs starting from this configuration,
as well as the runs starting with a broken invariant,
end with a broken invariant.
\end{restatable}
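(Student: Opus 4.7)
The plan is to track, through the initialisation step (lines~\ref{l1:startini}--\ref{l1:endini}), the two sums $S = \vr a + \vr c + \vr t$ and $T = S + \vr c'$ that appear in the invariant. Direct inspection of the three lines shows that every line preserves $T$, while $S$ is preserved by lines~\ref{l1:ini1} and~\ref{l1:endini}, and each iteration of the loop on line~\ref{l1:dec1} increases $S$ by exactly $3$ (it transfers one unit from $\vr t$ back to $\vr a$ and three units from $\vr c'$ into $\vr c$). Letting $k$ and $j$ denote the numbers of iterations of the loops on lines~\ref{l1:ini1} and~\ref{l1:dec1} respectively, the net effect of the step is thus $T \mapsto T$, $S \mapsto S + 3j$ and $\vr b' \mapsto \vr b' - 1$, and the non-negativity constraints on $\vr a$ and $\vr t$ give the bounds $k \leq \vr a_0$ and $j \leq \vr t_0 + k$.

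\emph{Existence and uniqueness of the good run.} Starting from a configuration where $\vr b' = B > 0$, $\vr c = 0$ and the invariant holds, we have $\vr t = 0$, $S_0 = \vr a_0$ and $T_0 = \vr a_0 \cdot 4^B$. For the invariant to still hold after the step we need $(\vr a_0 + 3j) \cdot 4^{B-1} = T_0 = 4 \vr a_0 \cdot 4^{B-1}$, which forces $j = \vr a_0$; combined with $j \leq k \leq \vr a_0$ this forces $k = j = \vr a_0$, so both loops are maximally iterated. This unique run preserves $\vr a$ (net change $-k + j = 0$) and also satisfies $\vr t = 0$ at the end (since $k = j$). Any other run has $j < \vr a_0$, whence $(\vr a_0 + 3j) \cdot 4^{B-1} < 4 \vr a_0 \cdot 4^{B-1} = T_0$, which is precisely a broken invariant.

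\emph{Propagation of a broken invariant.} Consider now a run that starts from $S_0 \cdot 4^B < T_0$ and completes the step through line~\ref{l1:endini} (in particular $B \geq 1$). Combining the bounds above gives $j \leq \vr t_0 + k \leq \vr t_0 + \vr a_0 \leq S_0$, so $S_0 + 3j \leq 4 S_0$; multiplying by $4^{B-1}$ yields $(S_0 + 3j) \cdot 4^{B-1} \leq S_0 \cdot 4^B < T_0$, the required broken invariant at the end.

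The main obstacle is this last propagation argument: one must confirm that $j \leq S_0$ holds in \emph{every} run, not merely in the good one. This relies on the structural fact that line~\ref{l1:dec1} can only return to $\vr a$ what line~\ref{l1:ini1} previously shifted into $\vr t$, together with the cap $k \leq \vr a_0$. Once this bound is available, the remaining comparison of $(S_0 + 3j) \cdot 4^{B-1}$ with $T_0$ is elementary.
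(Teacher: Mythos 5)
Your proof is correct and follows essentially the same route as the paper's: you show that the right-hand side of the invariant is preserved while $\vr a + \vr c + \vr t$ can grow by at most a factor of $4$ (via the bound $j \le \vr t_0 + k \le S_0$), so a broken invariant cannot be repaired, and maintaining it forces both loops to be maximally iterated, which preserves $\vr a$. Your version merely makes the paper's "at most quadrupled" argument explicit with the iteration counts $k$ and $j$ (the only detail left implicit is that the maximal run is feasible because the invariant guarantees $\vr c' \ge 3\vr a_0$, which is immediate).
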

\begin{proof}The value of  $\vr a + \vr c + \vr t $ is at most increased by $3 \cdot (\vr a + \vr t)$
along the initialisation part, 
as line \ref{l1:ini1} preserves this sum and moves the content of $\vr a$ to $\vr t$,
then line \ref{l1:dec1} increases this sum by at most $3$ times the value of~$\vr t$.
Therefore $\vr a + \vr c + \vr t $ is at most quadrupled, which implies
that the initialisation step cannot repair a broken invariant.
Moreover, to maintain a holding invariant the program
$\prog P_1$ needs to quadruple this sum.
This happens if and only if initially $\vr b' > 0$, $\vr c = 0$ and $\vr c' \geq 3 \cdot (\vr a+t)$ (this last condition is implied by the invariant);
and if then both loops are maximally iterated.
Finally, remark that upon maximal iteration of the loops 
$\vr a$ and $\vr t$ keep their initial values, as required.
\end{proof}

\begin{restatable}{lemma}{Psecond}\label{claim:P2}
From each configuration where $\vr b'>0$, $\vr a$ is divisible by $4$ and the invariant holds,
there is a unique run through 
the iteration step that maintains
the invariant and divides the value of $\vr a$ by $4$.
All the other runs starting from this configuration,
as well as those starting with a broken invariant or a holding invariant with a value of $\vr a$ not divisible by $4$,
end with a broken invariant.
\end{restatable}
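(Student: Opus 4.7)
The plan is to mirror the structure of Lemma~\ref{claim:P1}: identify a quantity preserved by every line of $\prog P_1$ (namely the sum $\vr a + \vr c + \vr t + \vr c'$), bound how much the iteration step can increase $\vr a + \vr c + \vr t$, and show that equality in this bound both pins down a unique maximal run and forces the divisibility hypothesis. I would parametrise a run through the iteration step by the numbers $\alpha,\beta,\gamma,\delta$ of iterations of the four inner loops on lines~\ref{l1:main1}--\ref{l1:main4} and read off from the commands that $\vr a + \vr c + \vr t + \vr c'$ is preserved globally, while $\vr b'$ decreases by exactly $1$ and the increase of $\vr a+\vr c+\vr t$ equals the total decrement $\alpha + \beta + \gamma + 8\delta$ of $\vr c'$. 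Since the initial invariant forces $\vr t = 0$, maintaining the invariant at the end is equivalent to exactly quadrupling $\vr a + \vr c + \vr t$ while reaching $\vr t = 0$ again, whereas repairing a broken invariant would demand strictly more than quadrupling.

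The core step is then the chain of inequalities
\[
\alpha + \beta + \gamma + 8\delta \;\le\; 3\alpha + \beta + \gamma \;\le\; 2\alpha + 3\beta + \vr a \;\le\; 3\vr a + 3\vr c,
\]
which follows from the non-negativity constraints enforced at each loop: $8\delta \le 2\alpha$ (since $\vr t = 2\alpha$ when entering line~\ref{l1:main4}), $\gamma \le \vr a - \alpha + 2\beta$, $\beta \le \vr c$, and $\alpha \le \vr a$. This shows that $\vr a + \vr c + \vr t$ can at most be quadrupled, which immediately settles the two broken-invariant cases of the statement: starting broken plus at most quadrupling keeps the invariant broken, and any run from a holding-invariant configuration that fails to achieve tight quadrupling ends with strict inequality in the invariant equation, hence broken.

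A run that actually maintains the invariant forces every inequality above to be tight, yielding the unique solution $\alpha = \vr a$, $\beta = \vr c$, $\gamma = 2\vr c$, $\delta = \vr a/4$. Integrality of $\delta$ forces $\vr a$ to be divisible by $4$, and conversely when this holds the maximal iterations are executable because the invariant together with $\vr b' \ge 1$ gives $\vr c' = (\vr a+\vr c)(4^{\vr b'}-1) \ge 3(\vr a+\vr c) = \alpha+\beta+\gamma+8\delta$; substituting back gives the final $\vr a$-value $\vr a - \alpha + 2\beta - \gamma + \delta = \vr a/4$ claimed by the lemma, while $\vr t_{end} = 2\alpha - 8\delta = 0$ is automatic. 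The main obstacle I anticipate is the bookkeeping required to confirm that this is the \emph{only} simultaneous-tightness solution, ruling out alternative strategies that under-iterate one inner loop in the hope of over-iterating another later; this amounts to tracking carefully where strictness enters each of the four inequalities in the chain.
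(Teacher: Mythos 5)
Your proof is correct and follows essentially the same route as the paper's: bound the increase of $\vr a + \vr c + \vr t$ via the loop-count constraints, conclude the sum is at most quadrupled while $\vr b'$ drops by one and $\vr a + \vr c + \vr t + \vr c'$ is preserved, and extract uniqueness, the divisibility condition, and the final value $\vr a/4$ from simultaneous tightness (the paper packages the same bound as two successive at-most-doublings, lines \ref{l1:main1}--\ref{l1:main2} and \ref{l1:main3}--\ref{l1:main4}, rather than your single chain). The only nit is that for runs starting with a \emph{broken} invariant you cannot assume $\vr t = 0$, so the first inequality should read $8\delta \le 2\alpha + \vr t$ and the chain should terminate at $3(\vr a + \vr c) + \vr t \le 3(\vr a + \vr c + \vr t)$, which still gives at most quadrupling.
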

\begin{proof}We divide our analysis of the iteration step in two parts:
\begin{description}
\item[Lines \ref{l1:main1}--\ref{l1:main2}]
The loop on line~\ref{l1:main1}, respectively line~\ref{l1:main2},
increases $\vr a + \vr c + \vr t$ by at most the value of $\vr a$, respectively $\vr c$.
Therefore the value of $\vr a + \vr c + \vr t$ is at most doubled,
which occurs only if initially $\vr t = 0$ and then both loops are maximally iterated,
resulting in $\vr c = 0$.
\item[Lines \ref{l1:main3}--\ref{l1:main4}]
The loop on line~\ref{l1:main3}, respectively~\ref{l1:main4},
increases $\vr a + \vr c + \vr t$ by at most the value of $\vr a$, respectively $\vr t$.
Therefore the value of $\vr a + \vr c + \vr t$ is at most doubled,
which occurs only if  $\vr c = 0$ upon reaching line~\ref{l1:main3} and then both loops are maximally iterated,
resulting in $\vr t = 0$.
\end{description}
Combining the two parts, we get that the value of $\vr a + \vr c + \vr t$ is at most quadrupled by the iteration step.
This directly implies that it is not possible to repair a broken invariant.
Moreover, to maintain a holding invariant this sum needs to be quadrupled.
This happens if and only if at the start of the iteration step 
$\vr b' > 0$, $\vr a$ is divisible by $4$, $\vr t = 0$ and $\vr c' \geq 3 \cdot (\vr a+c)$ (note that the last two conditions are implied by the invariant);
and if then the four loops are maximally iterated.\footnote{The fact that $\vr a$ is divisible by four is what allows
line~\ref{l1:main4} to be maximally iterated: if it is not the case,
the run would eventually get stuck with a content of $\vr t$ smaller than eight but greater than zero.}
To conclude, remark that maximally iterating lines \ref{l1:main1} and \ref{l1:main4} results in dividing the value of $\vr a$ by four:
line \ref{l1:main1} transfers twice the initial value of $\vr a$ to $\vr c$,
one eighth of which is then transferred back to $\vr a$ by line~\ref{l1:main4}.
\end{proof}

We proceed with the proof of Lemma~\ref{lem:P1}.
Let $A,B \in \mathbb{N}$
and let $\pi$ be a $\{c'\}$-zeroing run of the program $\prog P_1$
starting from 
$\triple {A} {B} {\vr a} {\vr b'} {\vr c'} {\vr X}$.
Initially the counters satisfy:
\[
{\vr a} = A,
\quad {\vr b'} = B,
\quad {\vr c'}= A \cdot (4^B-1),
\quad \vr b = \vr c = \vr t = 0.
\]
This directly implies that the invariant holds at the beginning of $\pi$.
Let us analyse the values of the counters at the end of $\pi$.
We immediately get $\vr c'= 0$ since $\pi$ is $\{c'\}$-zeroing.
Note that this implies that the invariant cannot be broken
as the counters always hold non-negative integer values.
As a consequence, Lemmas~\ref{claim:P1} and~\ref{claim:P2} imply
that the invariant still holds at the end of $\pi$,
and that $\pi$ is the unique $\{c'\}$-zeroing run of of $\prog P_1$
starting from
$\triple {A} {B} {\vr a} {\vr b'} {\vr c'} {\vr X}$.
To conclude the proof, we now show that at the end of $\pi$
all the counters match
$\triple {{A} \cdot {4^{B - \F 1(B)}}} {\F 1(B)} {\vr a} {\vr b} {\vr c} {\vr X}$:
\[
{\vr a} = A \cdot {4^{B-\F 1(B)}},
\quad {\vr b} = \F 1(B),
\quad {\vr c}= A \cdot (4^{B}-4^{B-\F 1(B)}),
\quad \vr b' = \vr c' = \vr t = 0.
\]
First, the invariant directly yields $\vr t = 0$,
and also $\vr b' = 0$ by using the fact that $\vr c'=0$.
As $\vr b'$ starts with value $B$ and is decremented by one
along the initialisation step and each iteration step,
we get that $\pi$ visits the iteration step $B-1$ times.
In turn, this implies that the final value of $\vr b$ is $2B-1 = \F 1(B)$,
and by Lemmas~\ref{claim:P1} and~\ref{claim:P2} we also get that the final value of $\vr a$
is $\frac{A}{4^{B-1}} = A \cdot {4^{B-\F 1(B)}}$.
Combining this with the fact that the initial value
$A \cdot 4^{B}$ of the sum $\vr a+c+t+c'$ 
is preserved along $\pi$
finally yields the appropriate value for $\vr c$.

Note that the run $\pi$ exists if and only if $ {4^{\F 1(B)-B}}$ divides $A$.
Otherwise Lemma~\ref{claim:P2} implies that the invariant is broken before the end of the run.
This proves that $\prog P_1$ is an $\F 1$-amplifier.
The fact that $\prog P_1$ is a \emph{strong} $\F 1$-amplifier then directly follows from Lemmas~\ref{claim:P1} and~\ref{claim:P2}.

\subsection{Construction of the \texorpdfstring{\boldmath $\successor{{\,F\,}}$}{F \textasciitilde}-amplifier \texorpdfstring{\boldmath $\successor{{\, \prog P\,}}$}{\#1 P \textasciitilde} from an \texorpdfstring{\boldmath $F$}{F}-amplifier \texorpdfstring{\boldmath $\prog P$}{\#1 P}}
\label{subsec:lift}
\enlargethispage{1.5\baselineskip}
Let $(\prog P,\vr X,(a,b,c),(a,b',c'),t,Z)$ be a strong
$F$-amplifier for some function $F : \Npos \to \Npos$.
We construct a strong $\successor{{\,F\,}}$-amplifier $\successor{{\,\prog P\,}}$
out of $\prog P$.
The program $\successor{{\,\prog P\,}}$
uses the counters of $\prog P$
plus two fresh input counters $\vr{b''}$ and $\vr{c''}$,
and it shares the output counters of $\prog P$:

\PROG{0.9}{Program $\successor{{\,\prog P\,}}$}{prog:P}{
\LLoop{$\vr a \tran t$  \label{l:ini1n}}
\label{l:startinitn} 
\LLoop{$\vr t \tran a$  \quad \sub{\vr{c''}} 3 \quad \add{\vr c} {3}}\label{l:dec1n}
\State \sub{\vr{b''}} 1 \quad \add{\vr b} {1}\label{l:endinitn}    \Loop 
\label{l:startmainn} 
\LLoop{$\vr a \tran t$}\label{l:main3n} \label{l:startn}
\LLoop{$\vr t \tran a$ \quad \sub{\vr{c''}} 3 \quad \add{\vr a} {3}}\label{l:endn}
\LLoop{$\vr c \tran c'$ \quad \sub{\vr{c''}} 3 \quad \add{\vr c'} {3}}\label{l:begcn}\label{l:dec2n}\label{l:endcn}
\LLoop{$\vr b \tran b'$}\label{l:begbn}\label{l:endbn}
\State $\prog P$  \label{l:invoke}
\State \sub{\vr{b''}} 1  \label{l:dec}
\EndLoop
\label{l:endmainn} 
}
\medskip
\\
Similarly to $\prog P_1$ the program $\successor{{\,\prog P\,}}$
consists of
an \emph{initialisation} step
lines \ref{l:startinitn}--\ref{l:endinitn}
(differing from $\prog P_1$ only by renaming counters),
and an \emph{iteration} step lines \ref{l:startmainn}--\ref{l:endmainn}
(differing significantly from $\prog P_1$).\begin{restatable}{lemma}{P}\label{lem:P}
For every strong
$F$-amplifier
$(\prog P,\vr X,(a,b,c),(a,b',c'),t,Z)$,
the program
$(\successor{{\,\prog P\,}},\vr X \cup \{\vr{b''}, \vr{c''}\},(a,b,c),(a,b'',c''),Z \cup \{c''\})$
is a strong $\successor{{\, F\,}}$-amplifier.
\end{restatable}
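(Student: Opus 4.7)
The plan is to mirror the structure used for $\prog P_1$ in Section~\ref{subsec:F1}, with the subroutine call to $\prog P$ taking the place of the core quadrupling block and the strong-amplifier guarantees on $\prog P$ filling in for the explicit analysis done there. I would work with the invariant
\[
(\vr a + \vr c + \vr t + \suma{\vr Z}) \cdot 4^{\vr{b''}} = \vr a + \vr c + \vr t + \suma{\vr Z} + \vr{c''}, \quad \vr t = 0, \quad \suma{\vr Z} = 0,
\]
together with the corresponding broken invariant obtained by replacing ``$=$'' by ``$<$''. The first key observation is that the quantity $S := \vr a + \vr c + \vr t + \suma{\vr Z} + \vr{c''}$ is preserved by every command of $\successor{\,\prog P\,}$: every line outside of line~\ref{l:invoke} either rearranges counters already counted in $S$ or pumps three units from $\vr{c''}$ into $\vr a$ or $\vr{c'}$ (both members of $S$), and the call to $\prog P$ preserves $\vr a + \vr c + \vr t + \suma{\vr Z}$ by the first strong-amplifier condition while not touching $\vr{c''}$. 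Hence maintaining the invariant through a segment that decrements $\vr{b''}$ by one amounts to quadrupling $\vr a + \vr c + \vr t + \suma{\vr Z}$.

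With $S$ in hand, the initialisation step (lines \ref{l:startinitn}--\ref{l:endinitn}) is a verbatim analog of the first block of $\prog P_1$, so the proof of Lemma~\ref{claim:P1} transfers unchanged. For the iteration step (lines \ref{l:startmainn}--\ref{l:endmainn}) I would split the analysis into four blocks: (i) lines \ref{l:startn}--\ref{l:endn} at most quadruple $\vr a$ by the move-to-$\vr t$-then-pump-from-$\vr{c''}$ trick, (ii) line \ref{l:begcn} at most quadruples $\vr c$ while transferring it into $\vr{c'}$, (iii) line \ref{l:begbn} copies $\vr b$ into $\vr{b'}$ without affecting $S$, and (iv) the call to $\prog P$ on line~\ref{l:invoke} is handled via the amplifier hypothesis, after which the decrement of $\vr{b''}$ restores the exponent.

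The main obstacle lies in block (iv). Assume first that blocks (i)--(iii) all realise their maximal quadruplings; then a direct computation shows that the configuration entering $\prog P$ is precisely the input triple $\triple{4 A_i}{B_i}{\vr a}{\vr{b'}}{\vr{c'}}{\vr X}$, where $(A_i, B_i)$ denotes the current output triple on $(\vr a, \vr b, \vr c)$ at the beginning of the iteration. Consequently $\prog P$'s own invariant holds at entry, so the $F$-amplifier property produces the expected output triple $\triple{4 A_i \cdot 4^{B_i - F(B_i)}}{F(B_i)}{\vr a}{\vr b}{\vr c}{\vr X}$ exactly when $4^{F(B_i) - B_i}$ divides $4 A_i$, and no $\vr Z$-zeroing run otherwise. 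If, on the other hand, any one of blocks (i)--(iii) falls short of maximal quadrupling, then $\successor{\,\prog P\,}$'s invariant is broken upon entering $\prog P$, which is precisely the form of $\prog P$'s own broken invariant; the second strong-amplifier condition then propagates this breakage to the exit of $\prog P$, and the subsequent decrement of $\vr{b''}$ cannot repair it. This is exactly what the \emph{strong} amplifier notion was introduced for, and why the induction is carried out on strong amplifiers rather than on plain ones.

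Putting the pieces together, each successful iteration turns the output triple $(A_i, B_i)$ into $(A_i \cdot 4^{B_i - F(B_i) + 1}, F(B_i))$ and decrements $\vr{b''}$ by one; since $\vr{b''}$ starts at $n$ and is decremented once by the initialisation and once per iteration, the outer loop runs exactly $n-1$ times, giving $\vr b = F^{n-1}(1) = \successor{F}(n)$ and $\vr a = A \cdot 4^{n - \successor{F}(n)}$, with $\vr c$ then fixed by the preservation of $S$ and the divisibility condition on $A$ obtained by telescoping the per-iteration conditions. Finally, the strong-amplifier conditions for $\successor{\,\prog P\,}$ with end-counter set $\vr Z \cup \{\vr{c''}\}$ follow directly from the preservation of $S$ (property~1) and from the propagation of the broken invariant established in block (iv) (property~2).
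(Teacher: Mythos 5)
Your overall architecture (preserved sum $S$, per-step quadrupling of $\vr a + \vr c + \vr t + \suma{\vr Z}$, good runs computing the triples, bad runs breaking an invariant) matches the paper's proof, and your good-run analysis is essentially correct. The gap is in block (iv), precisely at the case the strong-amplifier notion exists to handle. You claim that if any of blocks (i)--(iii) falls short of maximal iteration then ``$\successor{{\,\prog P\,}}$'s invariant is broken upon entering $\prog P$.'' This is false for block (iii): the loop on line~\ref{l:begbn} only moves content from $\vr b$ to $\vr b'$, and neither counter occurs in your invariant $(\vr a + \vr c + \vr t + \suma{\vr Z})\cdot 4^{\vr b''} = \vr a + \vr c + \vr t + \suma{\vr Z} + \vr{c''}$, nor in $S$. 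Under-iterating it therefore leaves your invariant perfectly intact; what it produces instead is a configuration that is \emph{not a triple} on $(\vr a,\vr b',\vr c')$ (because $\vr b'$ is too small and $\vr b>0$), on which the plain amplifier definition imposes no constraint whatsoever. You also conflate two distinct inequalities: your broken invariant is in terms of $\vr b''$ and $\vr c''$, whereas condition~2 of the strong-amplifier definition concerns $(\vr a + \vr c + \vr t + \suma{\vr Z} - \vr c')\cdot 4^{\vr b'} < \vr a + \vr c + \vr t + \suma{\vr Z}$, i.e.\ $\vr b'$ and $\vr c'$. The $\vr b''/\vr c''$ invariant passes through the call to $\prog P$ by condition~1 alone (plus the fact that $\prog P$ never touches $\vr b''$ or $\vr c''$); condition~2 is \emph{not} about that. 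The correct argument for the line~\ref{l:begbn} failure is: entering $\prog P$ with $\vr b'$ smaller than in the maximal-iteration run establishes the $\vr b'/\vr c'$ inequality of condition~2, which $\prog P$ preserves, and at the exit this inequality forces $\vr c' > 0$, so the call is not $\vr Z$-zeroing.

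This points to a second, related omission: you never explain why a non-$\vr Z$-zeroing call to $\prog P$ dooms the whole run of $\successor{{\,\prog P\,}}$. Leftover nonzero counters in $\vr Z$ do not by themselves put you in the strict-inequality (``broken'') regime of your $\vr b''/\vr c''$ invariant. The paper's argument is that if the faulty call is not the last one, then the \emph{next} iteration step can increase $\vr a + \vr c + \vr t + \suma{\vr Z}$ by at most $3(\vr a + \vr c)$, which fails to quadruple the sum when $\suma{\vr Z} > 0$, and only then is the invariant irrecoverably broken (if the faulty call is the last one, the run trivially fails to be $(\vr Z\cup\{\vr c''\})$-zeroing). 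A similar quantitative argument is needed for runs exiting the outer loop after fewer than $B-1$ iterations, which you assert away. These pieces are needed to close the proof.
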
~\\
The proof of Lemma~\ref{lem:P} can be found in Appendix~\ref{appendix:P}.
Here, we provide an overview of the main intuition behind it.
As an $\successor{{\, F\,}}$-amplifier,
$\successor{{\,\prog P\,}}$ is expected to map 
each input
$\triple {A} {B} {\vr a} {\vr b''} {\vr c''} {\vr X \cup \{\vr{b''}, \vr{c''}\}}$
such that $4^{\successor{{\, F\,}}(B)-F(B)}$ divides $A$ to the output
$\triple {{A} \cdot 4^{F(B)-\successor{{\, F\,}}(B)}} {\successor{{\, F\,}}(B)} {\vr a} {\vr b} {\vr c} {\vr X \cup \{\vr{b''}, \vr{c''}\}}$.
The intended behaviour of $\successor{{\,\prog P\,}}$ is straightforward:
Since for all $n \in \Npos$
the value $\successor{{\,F\,}}(n)$
is obtained by applying the function $F$ to $1$ for $n-1$ consecutive times,
we expect $\successor{{\,\prog P\,}}$ to apply
the program $\prog P$ exactly $B-1$ times to transform a $1$-triple into an 
$\successor{{\,\prog F\,}}(B)$-triple.
To show that $\successor{{\,\prog P\,}}$ behaves as expected,
we study the following invariant:
\begin{align}
&\textsc{Invariant:} 
& &(\vr a + c + t) \cdot 4^{\vr b''} = \vr a + c + t + c'' \ 
\textup{ and } \
\vr t = 0;\notag\\
&\textsc{Broken invariant:}
& &(\vr a + c + t) \cdot 4^{\vr b''} < \vr a + c + t + c''.\notag
\end{align}
The starting configuration $\triple {A} {B} {\vr a} {\vr b''} {\vr c''} {\vr X \cup \{\vr{b''}, \vr{c''}\}}$
satisfies the invariant.
The program $\successor{{\,\prog P\,}}$ is designed such that every run $\pi$
starting from such a configuration then satisfies:
\begin{itemize}
\item If along $\pi$ all the loops are maximally iterated
and all the calls to $\prog P$ are $Z$-zeroing,
then the invariant holds until the end of $\pi$.
Moreover, $\pi$ then matches the expected behaviour of $\successor{{\,\prog P\,}}$ described above.
In particular, $\pi$ will correctly amplify $B-1$ times via $\prog P$ a triple $\triple {{A}} {1} {\vr a} {\vr b} {\vr c} {\vr X}$,
thus ending in $\triple {{A} \cdot 4^{F(B)-\successor{{\, F\,}}(B)}} {\successor{{\, F\,}}(B)} {\vr a} {\vr b} {\vr c} {\vr X \cup \{\vr{b''}, \vr{c''}\}}$.
\item However, if $\pi$ fails to maximally iterate one loop,
or does a call to $\prog P$ that is not $Z$-zeroing,
then the invariant is irremediably broken, which implies that $\pi$ is not $(Z \cup \{c''\})$-zeroing.
\end{itemize}
This proves that $\successor{{\,\prog P\,}}$ is a strong $\successor{{\, F\,}}$-amplifier.

\begin{toappendix}
\section{Proof of Lemma~\ref{lem:amp}}\label{appendix:P}

\P*

The proof is structured as follows:
We begin with a technical lemma implying that
$\successor{{\,\prog P\,}}$ satisfies the two invariants required to be a
\emph{strong} amplifier (Claim~\ref{lem:strongAMP}).
Then, to show that $\successor{{\,\prog P\,}}$
is an $\successor{{\,\prog F\,}}$-amplifier,
we formalise the expected behaviour of
the runs of $\successor{{\,\prog P\,}}$ (Equations~\eqref{eq:w}--\eqref{eq:xB}),
we show that the runs that fit
this expected behaviour $(Z\cup\{c''\})$-compute $\successor{{\,\prog F\,}}$ (Claim~\ref{claim:goodAmp}),
and that the runs that do not fit this expected behaviour are
not $(Z\cup\{c''\})$-zeroing (Claim~\ref{claim:badAmp}).

\subparagraph*{Invariants of \texorpdfstring{\boldmath $\successor{{\,\prog P\,}}$}{\#1 P \textasciitilde}.}
Before delving into the intricate functioning of
$\successor{{\,\prog P\,}}$,
we show two invariants that hold for every run.
On top of being prerequisites for $\successor{{\,\prog P\,}}$
to qualify as a \emph{strong} amplifier,
these invariants offer valuable assistance in proving the next results
of this section.

\begin{claim}\label{lem:strongAMP}
The initialisation step and the iteration step of
$\successor{{\,\prog P\,}}$ both
preserve the value of
$\vr a + \vr c + \vr t + \suma{\vr Z} + \vr{c''}$
and either preserve or decrease the value of
$(\vr a + \vr c + \vr t + \suma{\vr Z}) \cdot 4^{\vr{b''}}$.
\end{claim}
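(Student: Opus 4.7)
The plan is to verify both invariants by a line-by-line inspection of the initialisation step (lines~\ref{l:startinitn}--\ref{l:endinitn}) and the iteration step (lines~\ref{l:startmainn}--\ref{l:endmainn}) of $\successor{{\,\prog P\,}}$.

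For the first invariant (preservation of $\vr{a} + \vr{c} + \vr{t} + \suma{\vr Z} + \vr{c''}$), I would classify each elementary instruction into three harmless categories: (i) transfers inside the monitored set, such as $\vr{a} \tran \vr{t}$; (ii) balanced combined updates, for instance $\vr{t} \tran \vr{a}$ paired with a decrement of $\vr{c''}$ by $3$ and an increment of $\vr{c}$ by $3$, whose net change on the sum is $-1+1-3+3=0$; and (iii) updates affecting only counters outside the sum, such as $\vr{b} \tran \vr{b'}$ and the final decrement of $\vr{b''}$. The only non-elementary instruction is the call to $\prog P$ inside the iteration step, and here I would invoke the strong-amplifier hypothesis on $\prog P$ to obtain preservation of $\vr{a} + \vr{c} + \vr{t} + \suma{\vr Z}$; since $\vr{c''}$ is not a counter of $\prog P$, the full sum is preserved as well.

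For the second invariant, set $S := \vr{a} + \vr{c} + \vr{t} + \suma{\vr Z}$. In both steps $\vr{b''}$ is decremented exactly once, at the very end, so the factor $4^{\vr{b''}}$ is divided by $4$; it therefore suffices to show that within the body of each step $S$ can grow by at most a factor of $4$. The plan is to bound the contribution of each loop in terms of the initial values $\vr{a}_0, \vr{c}_0, \vr{t}_0$. Concretely: the first loop ($\vr{a} \tran \vr{t}$) preserves $S$; the second loop raises $S$ by $3$ per iteration and is limited to at most $\vr{a}_0+\vr{t}_0$ iterations (the value of $\vr{t}$ after the first loop); in the iteration step, the third loop likewise raises $S$ by $3$ per iteration (because $\vr{c'} \in \vr Z$ gains $4$ while $\vr{c}$ loses $1$) and is bounded by $\vr{c}_0$ iterations (loops~1--2 do not touch $\vr c$); the fourth loop preserves $S$, and the call to $\prog P$ preserves $S$ by the strong-amplifier hypothesis. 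Summing these contributions yields $S_f - S_0 \leq 3(\vr{a}_0+\vr{c}_0+\vr{t}_0) \leq 3 S_0$, hence $S_f \leq 4 S_0$, and the single decrement of $\vr{b''}$ exactly compensates this fourfold growth in $(\vr{a} + \vr{c} + \vr{t} + \suma{\vr Z}) \cdot 4^{\vr{b''}}$.

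The main obstacle I anticipate is identifying, for each loop that can grow $S$, the correct ``source'' counter whose initial value caps the number of iterations ($\vr{t}$ for the second loop, $\vr{c}$ for the third loop of the iteration step). Once these bounds are matched to disjoint summands of $S_0$, the multiplicative slack of $4$ lines up exactly with the single decrement of $\vr{b''}$, and both invariants drop out. A minor subtlety worth flagging when writing the proof up is that $\suma{\vr Z}$ in the statement refers to the end counters of the underlying amplifier $\prog P$, not to the extended set $\vr Z \cup \{\vr{c''}\}$ attached to $\successor{{\,\prog P\,}}$, so the strong-amplifier hypothesis on $\prog P$ transfers without adjustment.
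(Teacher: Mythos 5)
Your proposal is correct and follows essentially the same route as the paper's proof: a line-by-line check that every instruction (including the call to $\prog P$, via the strong-amplifier hypothesis) preserves $\vr a + \vr c + \vr t + \suma{\vr Z} + \vr{c''}$, and a per-loop bound showing $\vr a + \vr c + \vr t + \suma{\vr Z}$ grows by at most a factor of $4$ within each step, which the single decrement of $\vr{b''}$ compensates. Your explicit iteration-count bounds ($\vr a_0+\vr t_0$ for the second loop, $\vr c_0$ for the third) are just a slightly more detailed phrasing of the paper's "at most three times the value of $\vr a + \vr t$ (resp.\ $\vr c$)" bounds.
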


\begin{claimproof}
We start by observing that
the sum $\vr a + \vr c + \vr t + \suma{\vr Z} + \vr{c''}$
stays constant along every run of $\successor{{\,\prog P\,}}$:
every command line preserves it,
including line~\ref{l:invoke} since $\prog P$ is a strong $F$-amplifier.
We now study the effect of
the initialisation and iteration steps
on the value of 
$(\vr a + \vr c + \vr t + \suma{\vr Z}) \cdot 4^{\vr{b''}}$.

\begin{description}
\item[Initialisation:]
The sum $\vr a + \vr c + \vr t + \suma{\vr Z}$
is preserved in lines~\ref{l:startinitn} and~\ref{l:endinitn}
and is increased in line~\ref{l:dec1n} by at most three times the value of $\vr t$,
thus it is at most quadrupled by the initialisation step.
Since $\vr b''$ is decremented by $1$ during the initialisation step,
this proves that the value of
$(\vr a + \vr c + \vr t + \suma{\vr Z}) \cdot 4^{\vr{b''}}$
is either preserved or decreased.
\item[Iteration:]
The sum $\vr a + \vr c + \vr t + \suma{\vr Z}$
is increased in lines \ref{l:startn}--\ref{l:endn}
by at most three times the value of $\vr a + \vr t$;
it is increased in line
\ref{l:begcn} by at most three times the value of $\vr c$;
and it is then preserved in lines~\ref{l:begbn},~\ref{l:invoke} and~\ref{l:dec}
(since $\prog P$ is a strong amplifier).
Hence the value of $\vr a + \vr c + \vr t + \suma{\vr Z}$
is at most quadrupled by each occurrence of the iteration step.
Since $\vr b''$ is decremented by $1$,
this shows that the value of
$(\vr a + \vr c + \vr t + \suma{\vr Z}) \cdot 4^{\vr{b''}}$
is either preserved or decreased. \qedhere
\end{description}
\end{claimproof}

\enlargethispage{-1\baselineskip}
\subparagraph*{Expected behaviour of \texorpdfstring{\boldmath $\successor{{\,\prog P\,}}$}{\#1 P \textasciitilde}.}
The intended behaviour of $\successor{{\,\prog P\,}}$ is straightforward.
We start with a $B$-triple over the counters $\vr a, b'', c''$.
The initialisation step establishes a $1$-triple over $\vr a, b,\vr c$.
Next, in the iteration step,
this $1$-triple is first moved to $\vr a, b',c'$,
and then $\prog P$ is invoked to transform it
into a $F(1)$-triple over $\vr a, b,\vr c$.
By repeating the iteration step $B-2$ more times,
we obtain a $F^{B-1}(1) = \successor{{\,\prog F\,}}(B)$-triple over $\vr a, b,\vr c$,
as expected from a $\successor{{\,\prog F\,}}$-amplifier.
We now formalise this expected behaviour as a set of equations.

Let $\pi$ be a run of $\successor{{\,\prog P\,}}$
that visits the iteration step $n$ times.
Let $w_0(\pi)$ denote the valuation of the counter set ${\vr X} \cup \{\vr b'',c''\}$
at the start of $\pi$,
and $x_n(\pi)$ denote the valuation of the counter set $\vr X$
at the end of $\pi$.
Moreover, for every $i = 0,1 \ldots, n-1$,
we use $x_i(\pi)$ and $y_{i}(\pi)$
to denote the valuation of $\vr X$
at the start of the $(i+1)$th iteration step of $\pi$
and at the start of the $(i+1)$th call to the program $\prog P$, respectively.
This notation allows us to formally express
the expected behaviour described earlier:
\begin{align}
w_0(\pi) = \ &
\triple {A} {B} {\vr a} {\vr b''} {\vr c''} {\vr X \cup \{b'',c''\}}
\label{eq:w}\\
x_0(\pi) = \ &
\triple {A} {1} {\vr a} {\vr b} {\vr c} {\vr X}
\label{eq:x0}\\
x_i(\pi) = \ &
\triple {A \cdot 4^{i+1-F^{i}(1)}} {F^{i}(1)} {\vr a} {\vr b} {\vr c} {\vr X}
\label{eq:x}\\
y_i(\pi) = \ &
\triple {A\cdot 4^{i+2-F^{i}(1)}} {F^{i}(1)} {\vr a} {\vr b'} {\vr c'} {\vr X}
\label{eq:y}\\
x_{B-1}(\pi) = \ &
\triple {A\cdot 4^{B-\successor{{\,F\,}}(B)}} {\successor{{\,F\,}}(B)} {\vr a} {\vr b} {\vr c} {\vr X}
\label{eq:xB}
\end{align}

\begin{figure}[b]
\centering
\newcolumntype{C}{>{\hfil$}p{2.0cm}<{$\hfil}}
$\begin{array}{r|C|C|C|C|C|C|C|}
\cline{2-6}
& \vr a & \vr b & \vr b' & \vr c  & \vr c'
\\
\cline{2-6}
w_0(\pi): & A & 0 & 0 & 0 & 0
\\
x_0(\pi): & A & 1 & 0 & A \cdot 2 - {\vr a} & 0
\\
x_i(\pi): & A \cdot  4^{i + 1 - F^{i}(1)} & F^i(1) & 0 & A \cdot 4^{i+1} - {\vr a} & 0
\\
y_i(\pi): & A \cdot  4^{i + 2 - F^{i}(1)} & 0 & F^i(1) & 0 & A \cdot 4^{i+2} - {\vr a}
\\
x_{B-1}(\pi): & A \cdot 4^{B - \successor{{\,F\,}}(B)} & \successor{{\,F\,}}(B) & 0 & A \cdot 4^{B} - {\vr a} & 0
\\
\cline{2-6}
\end{array}$
\caption{Individual counter values corresponding to the expressions
in Equations~\eqref{eq:w}--\eqref{eq:xB}.}
\label{fig:valueEvolutionAMP}
\end{figure}
~\\
The individual counter values corresponding to these equations
are listed in Figure~\ref{fig:valueEvolutionAMP}.

We split the set of runs of  $\successor{{\,\prog P\,}}$ in two parts: 
the \emph{good} runs, for which we show that Equations~\eqref{eq:w}--\eqref{eq:xB}
hold, and the \emph{bad} runs, that we prove to be 
non $(Z \cup \{c''\})$-zeroing.
Formally, we say that a run of $\successor{{\,\prog P\,}}$ is \emph{good} if
it goes through $B-1$ iteration steps;
if all the loops visited along it are maximally iterated;
and if all its calls to the program $\prog P$ are $Z$-zeroing.
By opposition, we describe as \emph{bad} the runs that fail to satisfy
at least one of these conditions.

\subparagraph*{Good runs.}
We prove that the good runs of $\successor{{\,\prog P\,}}$
compute the function $\successor{{\,F\,}}$:

\begin{claim}\label{claim:goodAmp}
Let $A,B \in \Npos$ be two positive integers.
Every good run of $\successor{{\,\prog P\,}}$
starting in $\triple A B {\vr a} {\vr b''} {\vr c''} {\vr X \cup \{b'',c''\}}$
satisfies Equations~\eqref{eq:w}--\eqref{eq:xB},
thus in particular it ends
in $\triple {A\cdot 4^{B-\successor{{\,F\,}}(B)}} {\successor{{\,F\,}}(B)} {\vr a} {\vr b} {\vr c} {\vr X \cup \{b'',c''\}}$.
Moreover, there exists such a run
if and only if $4^{\successor{{\,F\,}}(B)-B}$ divides $A$.
\end{claim}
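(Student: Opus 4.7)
The plan is to establish Equations~\eqref{eq:w}--\eqref{eq:xB} by an induction on the number $i$ of completed iteration steps of the run, and then to extract the divisibility equivalence from the preconditions of the calls to $\prog P$.

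Equation~\eqref{eq:w} merely restates the starting configuration. For the initialisation step (lines~\ref{l:startinitn}--\ref{l:endinitn}), the maximal-iteration hypothesis forces line~\ref{l:startinitn} to drain the whole content of $\vr a$ into $\vr t$, and line~\ref{l:dec1n} to reverse this transfer while pulling exactly $3A$ units from $\vr{c''}$ into $\vr c$; line~\ref{l:endinitn} then decrements $\vr{b''}$ and sets $\vr b = 1$. A direct check yields Equation~\eqref{eq:x0}. For the inductive step, assuming Equation~\eqref{eq:x} at some $i < B-1$, I would analyse the three blocks of the iteration body: lines~\ref{l:main3n}--\ref{l:endn} quadruple $\vr a$ via the same ``route through $\vr t$'' mechanism as the initialisation; lines~\ref{l:begcn}--\ref{l:endcn} transfer $\vr c$ onto $\vr c'$ while tripling it from $\vr{c''}$, so that $\vr c'$ ends up four times the old value of $\vr c$; lines~\ref{l:begbn}--\ref{l:endbn} transfer $\vr b$ into $\vr b'$. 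The resulting configuration is precisely the triple of Equation~\eqref{eq:y}. The $Z$-zeroing call to $\prog P$ then invokes the $F$-amplifier property and produces the triple of Equation~\eqref{eq:x} at index $i+1$. Taking $i = B-1$ and using $F^{B-1}(1) = \successor{{\,F\,}}(B)$ yields Equation~\eqref{eq:xB}.

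For the divisibility equivalence, the $F$-amplifier definition applied to $y_i(\pi)$ from Equation~\eqref{eq:y} demands that $4^{F^{i+1}(1) - F^i(1)}$ divide $A \cdot 4^{i+2 - F^i(1)}$, equivalently $4^{F^{i+1}(1) - (i+2)} \mid A$. Among these $B-1$ constraints, the one at $i = B-2$ reads $4^{\successor{{\,F\,}}(B) - B} \mid A$, and a short case analysis using monotonicity of $F$ together with $F(n) \geq n$ shows that it implies all the others: when an intermediate exponent $F^{j+1}(1) - (j+2)$ is non-negative, the comparison reduces after cancellation to $F$ iterated $B-2-j$ times from $F^{j+1}(1)$ increasing the argument by at least $B-2-j$ units, which follows by induction; when the exponent is negative, the corresponding condition is trivial.

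Finally, if the single condition $4^{\successor{{\,F\,}}(B) - B} \mid A$ holds, then maximally iterating every loop and invoking $\prog P$ at each iteration produces a good run by construction, since each call satisfies its divisibility precondition. Conversely, if the condition fails then the final call to $\prog P$ cannot be $Z$-zeroing, so no good run exists. I expect the main delicacy to lie in checking the ``move-and-triple'' effect of lines~\ref{l:begcn}--\ref{l:endcn} under maximal iteration and in formalising the monotonicity step for the divisibility exponents; the remainder of the argument is a mechanical unfolding of the amplifier definition.
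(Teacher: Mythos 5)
Your induction on the number of completed iteration steps is exactly the paper's argument: the initialisation step under maximal iteration yields Equation~\eqref{eq:x0}, the iteration body maps $x_i(\pi)$ to $y_i(\pi)$, and the $Z$-zeroing call to $\prog P$ maps $y_i(\pi)$ to $x_{i+1}(\pi)$ via the amplifier property; the arithmetic you describe (quadrupling $\vr a$ through $\vr t$, moving $\vr c$ onto $\vr{c'}$ while quadrupling it, moving $\vr b$ to $\vr{b'}$) is correct. However, Equations~\eqref{eq:w}--\eqref{eq:xB} only constrain the counters of $\vr X$, whereas the claim asserts that the run ends in a triple over $\vr X \cup \{\vr{b''}, \vr{c''}\}$, so you still owe the final values $\vr{b''} = \vr{c''} = 0$. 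For $\vr{b''}$ this is just counting decrements; for $\vr{c''}$ the paper invokes Claim~\ref{lem:strongAMP}: the sum $\vr a + \vr c + \vr t + \suma{\vr Z} + \vr{c''}$ is preserved by every step (including the calls to $\prog P$, because $\prog P$ is a \emph{strong} amplifier), hence stays equal to $A \cdot 4^B$, and Equation~\eqref{eq:xB} then forces $\vr{c''} = 0$. Your proposal never returns to $\vr{c''}$, so this part is missing.

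On the divisibility equivalence, your ``only if'' direction is fine, but the ``if'' direction relies on the assertion that the constraint at $i = B-2$, namely $4^{\successor{{\,F\,}}(B) - B} \mid A$, dominates the constraints $4^{F^{j+1}(1)-(j+2)} \mid A$ for all $j < B-2$. This does not follow from $F$ being monotone with $F(n) \geq n$: take $F(n) = \max(n,5)$, so that $F^{j+1}(1) = 5$ for every $j$ and the exponent $F^{j+1}(1)-(j+2) = 3-j$ is \emph{largest} at $j=0$, strictly larger than the exponent at $j = B-2$ once $B \geq 4$. Your reduction ``$F$ iterated $B-2-j$ times increases the argument by at least $B-2-j$ units'' silently assumes $F(n) \geq n+1$ on the relevant range, which is not among the hypotheses on an amplifier. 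To be fair, the paper's own proof dispatches this point in a single sentence, and for the concrete functions actually fed into the induction the issue does not bite; but as a proof of the claim for an arbitrary strong $F$-amplifier, this step is a genuine gap and needs either an additional hypothesis on $F$ or a restriction to the functions actually used.
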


\begin{claimproof}
Let $\pi$ be a good run of $\successor{{\,\prog P\,}}$ starting in 
$\triple A B {\vr a} {\vr b''} {\vr c''} {\vr X}$.
We immediately get that Equation~\eqref{eq:w} is satisfied.
To show that $\pi$ satisfies Equations~\eqref{eq:x0}--\eqref{eq:xB},
we prove via three inductive steps that Figure~\ref{fig:valueEvolutionAMP} is an accurate
depictions of the valuations $x_i(\pi)$ and $y_i(\pi)$
for every $i \in \{0,2,\ldots,B-1\}$:
\begin{enumerate}
\item\label{item:Init}
First, starting from a valuation satisfying $\vr c'' \geq \vr a$,
the effect of the initialisation step
with maximal iteration of the flat loops
is equivalent to the following sequence of assignments:
\[
\vr b'' := \vr b'' - 1,
\quad \quad \vr b := \vr b + 1,
\quad \quad \vr{c''} := \vr {c''} - \vr a,
\quad \quad \vr c := \vr a.
\]
This maps the first row of Figure~\ref{fig:valueEvolutionAMP}
to its second row,
thus Equation~\eqref{eq:x0} holds.
\item\label{item:Iter}
Next, starting from a valuation satisfying $\vr c'' \geq \vr a$,
the effect of lines~\ref{l:startn}--\ref{l:endbn} of $\successor{{\,\prog P\,}}$
with maximal iteration of the flat loops
is equivalent to the following sequence of assignments:
\[
\vr a := 2 \cdot \vr a,
\quad \quad \vr b' := \vr b,
\quad \quad \vr b := 0,
\quad \quad \vr{c''} := \vr {c''} - ({\vr a+c}),
\quad \quad \vr c' := \vr c' + 2 c,
\quad \quad \vr c := \vr 0.
\]
This maps the third row of Figure~\ref{fig:valueEvolutionAMP}
to its fourth row, thus whenever Equation~\eqref{eq:x} holds for some
$0 \in \{1,2,\ldots,B-2\}$,
so does Equation~\eqref{eq:y}.
\item\label{item:Call}
Finally, as  the program $\prog P$ is a strong $F$-amplifier,
for all $i \in \{0,1,\ldots,B-2\}$ it $Z$-computes
$\triple {A \cdot 4^{i+2-F^{i+1}(1)}} {F^{i+1}(1)} {\vr a} {\vr b} {\vr c} {\vr X}$
from
$\triple {A \cdot 4^{i+2-F^{i}(1)}} {F^{i}(1)} {\vr a} {\vr b'} {\vr c'} {\vr X}$.
Therefore, as every call to $\prog P$ along $\pi$ is $Z$-zeroing
since $\pi$ a good run, we get that if Equation~\eqref{eq:y} holds
for some $i \in \{0,1,\ldots,B-2\}$ then Equation~\eqref{eq:x} holds for $i+1$.
\end{enumerate}

To show that the run $\pi$ ends in
$\triple {A\cdot 4^{B-\successor{{\,F\,}}(B)}} {\successor{{\,F\,}}(B)} {\vr a} {\vr b} {\vr c} {\vr X \cup \{b'',c''\}}$,
we still need to address the values
of counters $b''$ and $c''$
(as Equation~\ref{eq:xB} only
specifies the value of the counter set $X$).
We directly get that the value of $b''$ is $0$ at the end of $\pi$:
$b''$ starts with value $B$ and is decremented once in the initialisation step
and in each of the $B-1$ occurrences of the iteration step.
Moreover, we also get that $c''$ is $0$ at the end of $\pi$
since Claim~\ref{lem:strongAMP} yields that the value of
$\vr a + \vr c + \vr t + \suma{\vr Z} + \vr{c''}$
is constantly equal to $A \cdot 4^B$ along $\pi$.

Finally, concerning the existence of the run $\pi$,
remark that, while the register updates
mentioned in Item~\ref{item:Init} and~\ref{item:Iter} can be applied irrespective of the values of $A$ and $B$,
the $Z$-zeroing calls to $\prog P$ described in Item~\ref{item:Call}
can be fulfilled if and only if $A$ is divisible
by a sufficiently large power of $2$.
More specifically, the run $\pi$ described in this proof exists
if and only if $4^{\successor{{\,F\,}}(B)-B}$ divides $A$.
\end{claimproof}

\subparagraph*{Bad runs.}
We prove that the bad runs of $\successor{{\,\prog P\,}}$
do not  $(Z \cup \{c''\})$-compute anything:

\begin{claim}\label{claim:badAmp}
Let $A,B \in \Npos$ be two positive integers.
Every bad run of $\successor{{\,\prog P\,}}$
starting in $\triple A B {\vr a} {\vr b''} {\vr c''} {\vr X \cup \{b'', c''\}}$
is not $(Z \cup \{c''\})$-zeroing.
\end{claim}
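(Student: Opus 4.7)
I would prove Claim~\ref{claim:badAmp} by contrapositive: I assume $\pi$ is $(Z \cup \{c''\})$-zeroing and derive that $\pi$ must be good, i.e.\ that it performs exactly $B-1$ iterations, that all its flat loops are maximally iterated, and that all its calls to $\prog P$ are $\vr Z$-zeroing.

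The starting point is Claim~\ref{lem:strongAMP}. Along any run from the initial triple, the total sum $\vr a + \vr c + \vr t + \suma{\vr Z} + \vr{c''}$ stays constantly equal to $A \cdot 4^B$, while the quantity $(\vr a + \vr c + \vr t + \suma{\vr Z}) \cdot 4^{\vr{b''}}$ is monotonically non-increasing from the same initial value $A \cdot 4^B$. If $\pi$ is $(Z \cup \{c''\})$-zeroing, then $\vr c'' = \suma{\vr Z} = 0$ at the end of $\pi$, so the total identity forces $\vr a + \vr c + \vr t = A \cdot 4^B$ at the end, and combining this with the monotonicity forces $\vr{b''} = 0$ at the end. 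This rules out the first source of badness (namely having fewer than $B-1$ iterations, since more than $B-1$ is impossible), and moreover shows that the monotone quantity is in fact constant throughout $\pi$.

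The next step is to re-inspect the proof of Claim~\ref{lem:strongAMP} to identify when a single step preserves this quantity, rather than strictly decreasing it. Preservation by the initialisation step requires both of its flat loops to be maximally iterated. For the iteration step, the bound $3(\vr a + \vr c + \vr t) = 3S - 3\suma{\vr Z}$ (where $S := \vr a + \vr c + \vr t + \suma{\vr Z}$) on the increase of $S$ means that preservation requires both maximal iteration of all four flat loops \emph{and} $\suma{\vr Z} = 0$ at the start of the iteration; Property~1 of strong amplifiers automatically ensures that the call to $\prog P$ preserves $S$, so it imposes no further constraint. The maximal-iteration part rules out the second source of badness.

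To address the third source of badness, I would observe that outside the calls to $\prog P$ the counters in $\vr Z$ can only be incremented (line~\ref{l:begcn} only increments $\vr c' \in \vr Z$), and neither the initialisation step nor line~\ref{l:dec} modifies $\vr Z$. Hence the condition $\suma{\vr Z} = 0$ at the start of each iteration is equivalent to every preceding $\prog P$ call being $\vr Z$-zeroing, and the final $\prog P$ call is also $\vr Z$-zeroing because $\suma{\vr Z} = 0$ at the end of $\pi$. The subtle point to handle, and the main technical obstacle, is that a non-$\vr Z$-zeroing call to $\prog P$ does \emph{not} by itself decrease the monotone quantity — $\prog P$ preserves $S$ and leaves $\vr{b''}$ and $\vr{c''}$ untouched — so its damage only surfaces as a strict decrease during the \emph{next} iteration, where a positive $\suma{\vr Z}$ at the start prevents a full quadrupling of $S$. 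Threading this delayed-effect observation together with the monotonicity argument completes the proof.
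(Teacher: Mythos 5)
Your overall strategy is essentially the paper's, read contrapositively: both arguments rest on Claim~\ref{lem:strongAMP}, on the fact that $(\vr a + \vr c + \vr t + \suma{\vr Z})\cdot 4^{\vr{b''}}$ is non-increasing while $\vr a + \vr c + \vr t + \suma{\vr Z} + \vr{c''}$ is constant, and on the delayed-effect mechanism by which a non-$\vr Z$-zeroing call to $\prog P$ is only punished during the \emph{next} iteration step. Forcing $\vr{b''}=0$ at the end, and hence exactly $B-1$ iterations, is also correct.

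There is, however, a genuine gap in your treatment of the flat loops. You assert that preservation of the monotone quantity forces ``maximal iteration of all four flat loops'' of the iteration step. This is false for the loop at line~\ref{l:endbn}, which merely transfers $\vr b$ to $\vr{b'}$: neither counter occurs in $\vr a + \vr c + \vr t + \suma{\vr Z}$, nor in $\vr{b''}$ or $\vr{c''}$, so under-iterating this loop leaves your monotone quantity and $\vr{c''}$ completely untouched, and your argument cannot detect the error. Such a run is nonetheless bad by definition, so you must still show it is not $(Z\cup\{c''\})$-zeroing. The only handle on this mistake is its effect on the subsequent call to $\prog P$, which then starts with $\vr{b'}$ too small (and $\vr b>0$, so not from a triple on the input counters, so the amplifier definition by itself says nothing about that call). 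This is precisely where the paper invokes the \emph{second} condition in the definition of strong amplifiers: the call to $\prog P$ then starts with $(\vr a + \vr c + \vr t + \suma{\vr Z} - \vr{c'})\cdot 4^{\vr{b'}} < \vr a + \vr c + \vr t + \suma{\vr Z}$, this inequality is preserved by $\prog P$, hence $\vr{c'}>0$ at its exit and the call is not $\vr Z$-zeroing --- after which your delayed-effect argument takes over. Your proof never uses Property~2 of strong amplifiers, and without it this case cannot be closed.
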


\begin{claimproof}
\renewcommand\qedsymbol{\textcolor{lipicsGray}{\ensuremath{\vartriangleleft}}}
Let $\pi$ be a run of $\successor{{\,\prog P\,}}$
starting in
$\triple A B {\vr a} {\vr b''} {\vr c''} {\vr X}$.
At the start of $\pi$ we have:
\[
\vr a = A, \quad
\vr{b''} = B,
\quad \vr{c''} = \vr a \cdot (4^{\vr{b''}} - 1), 
\]
and all other counters are 0.
In particular, this implies 
$
\vr b = \vr b' = \vr c = \vr c' = \vr t = 0 
$,
thus
\begin{align} \label{eq:invstep}
(\vr a + \vr c + \vr t + \suma{\vr Z}) \cdot 4^{\vr{b''}}
= \vr a + \vr c + \vr t + \suma{\vr Z} + \vr{c''}. 
\end{align}
We start by observing that
Claim~\ref{lem:strongAMP} implies that 
$\pi$ is $Z$-zeroing if and only if
Equation~\eqref{eq:invstep} holds after every step of $\pi$
and $b' = 0$ at the end of $\pi$:
The right-hand of Equation~\eqref{eq:invstep} side is preserved,
and the value of the left hand-side never increases,
thus if it ever decreases it remains smaller than the right-hand side
until the term of $\pi$, which in particular implies that 
the value of $\vr c''$ is not $0$.

Therefore we can immediately deduce that
if $\pi$ visits the iteration step
less than $B-1$ times,
then $b' > 0$ at the end of $\pi$,
thus $\pi$ is not $(Z \cup \{c''\})$-zeroing
by Equation~\eqref{eq:invstep}.
For the rest of this proof, let us suppose that $\pi$
visits the iteration step $B-1$ times.
Whenever $\pi$ goes through the initialisation step
or the iteration step,
it decrements $\vr b'$ by $1$, while gaining the opportunity
to increment the sum $\vr a + \vr c + \vr t + \suma{\vr Z}$, that we denote $\suma{\vr Z'}$
in order to maintain Equation~\eqref{eq:invstep}.
As we showed when we studied the good runs, in an ideal scenario
$\suma{\vr Z'}$ is quadrupled,
which compensates the decrement of $\vr b'$,
and Equation~\eqref{eq:invstep} still holds.
We now show that the occurrence of a single mistake at any point
results in $\suma{\vr Z'}$ not being quadrupled along a step:
We suppose that the run $\pi$ is bad,
we list all the possible errors it can commit,
and show that each one breaks Equation~\eqref{eq:invstep}:
\begin{itemize}
\item
If $\pi$ fails to maximally iterate one of the flat loops
at lines \ref{l:ini1n} or \ref{l:dec1n} then 
the sum $\suma{\vr Z'}$
is not quadrupled during the initialisation step:
Maximally iterating both loops (that is, iterating both of them $\vr a$ times)
increments $\suma{\vr Z'}$ by $3 \cdot \vr a$,
which exactly quadruples it since initially
the other variables occurring in $\vr S$ have value $0$.
However, since line~\ref{l:dec1n} increases $\suma{\vr Z'}$,
not maximally iterating it results in a smaller value.
Moreover, while line~\ref{l:ini1n} has no direct effect on $\suma{\vr Z'}$,
not maximally iterating it reduces the number of times
line~\ref{l:dec1n}
can be iterated, which in turn reduces the value of $\suma{\vr Z'}$.
\item
If $\pi$ fails to maximally iterate one of the flat loops
at lines \ref{l:startn}, \ref{l:endn} or \ref{l:endcn} then
the sum $\suma{\vr Z'}$ is not quadrupled
during the corresponding iteration step:
Maximally iterating the three loops
(that is, iterating lines \ref{l:startn} and \ref{l:endn} $\vr a$ times
and line \ref{l:endcn} $\vr c$ times)
increments $\suma{\vr Z'}$ by $3 \cdot (\vr a + \vr c)$,
which exactly quadruples it as long as the other variables occurring in $\suma{\vr Z'}$
had value $0$ to start with.
However, since lines \ref{l:endn} and \ref{l:endcn} increase $\suma{\vr Z'}$,
not maximally iterating them results in a smaller value.
Moreover, while line~\ref{l:startn} has no direct effect on $\suma{\vr Z'}$,
not maximally iterating it reduces the number of times line~\ref{l:endn}
can be iterated.
\item
If $\pi$ does a non $Z$-zeroing call to the program $\prog P$,
we differentiate two cases.
If this happens in the last iteration step we get
that $\pi$ is not $(Z \cup \{c''\})$-zeroing as it is not $Z$-zeroing.
If this happens in one of the previous iteration steps
then the sum $\suma{\vr Z'}$ is not quadrupled
in the \emph{next} iterations step:
as we just saw the iteration step increases $\vr S$ by at most
$3 \cdot (\vr a + \vr c)$,
which fails to quadruple it if there are nonzero counters in $\vr Z$.
\item
Finally,
let us consider the case where the first error committed by $\pi$
is failing to maximally iterate the flat loop at line~\ref{l:endbn}.
In this case, we show that the subsequent call to $\prog P$
is not $Z$-zeroing, which, as we have just shown,
implies that $\pi$ is not ($Z \cup {c''}$)-zeroing.
Since we assume that this is the first error committed by $\pi$,
up to this point, $\pi$ has behaved as a good run.
To analyse this situation,
let $\pi'$ be the run that behaves as $\pi$ up to this point
but then maximally iterates the flat loop at line~\ref{l:endbn}.
By Lemma~\ref{claim:goodAmp}, we know that $\pi'$ enters
the call to $\prog P$ with a counter valuation matching
$\triple {A\cdot 4^{i+2-F^{i}(1)}} {F^{i}(1)} {\vr a} {\vr b'} {\vr c'} {\vr X}$
for some $0 \leq i \leq B-1$. 
In particular, the following equation holds for $\pi'$:
\[
{(\vr a + \vr c + \vr t + \suma{\vr Z}) \cdot 4^{\vr{b'}}} =
{A \cdot 4^{i+2}} =
\vr a + \vr c + \vr t + \suma{\vr Z} + \vr c' 
\]
However, since $\pi$ did \emph{not} maximally iterate line~\ref{l:endbn},
$\vr b'$ will be smaller in $\pi$ compared to $\pi'$
(and $\vr b$ will be larger - but this has no impact on the following argument
since $\vr b \notin \vr Z$).
Consequently, $\pi$ will call the program $\prog P$ 
with a counter valuation satisfying:
\[
(\vr a + \vr c + \vr t + \suma{\vr Z}) \cdot 4^{\vr{b'}} <
\vr a + \vr c + \vr t + \suma{\vr Z} + \vr c'.
\]
Since $\prog P$ is a strong amplifier,
this equation still holds at the exit of $\prog P$.
In particular, this implies that $\vr c'$ is not $0$,
thus the call to $\prog P$ is not $Z$-zeroing.\qedhere
\end{itemize}
\end{claimproof}
\end{toappendix}

\section{Amplifiers defined by counter programs with a pushdown stack}\label{sec:pampl}
The goal of this section is to prove Lemma~\ref{lem:amppushdown}.
\medskip
\\
Our construction is based on the amplifiers from Section \ref{sec:ampl}.
The main idea is to ``delegate'' some counters to the stack.
The stack alphabet consists exactly of those counters which are delegated, and the value of each 
delegated counter $\vr x$ corresponds to the number
of occurrences of the symbol  $\vr x$ on the stack.
Therefore, ``delegated counters'' can be understood as a synonym of ``stack symbols'' in the sequel.
This idea motivates the following definition.
Let $\vr S$ and $\vr X$ be two disjoint sets of delegated,
respectively non-delegated, counters.
We define the function 
\[
\vals{\vr X, \vr S} : \N^{\vr X} \times \vr S^* \to \N^{\vr X\cup \vr S}
\]
that maps a configuration, i.e., a valuation $v$ of the non-delegated counters of $\vr X$
together with a stack content $s\in \vr S^*$, 
to a valuation of all the counters from $\vr X\cup \vr S$,
as follows: 
$\vals{\vr X, \vr S}(v, s) = v'$, where
$v'({\vr x}) = v(\vr x)$ for $\vr x \in \vr X$, and
$v'({\vr x})$ is the number of occurrences of $\vr x$ in $s$ for $\vr x\in \vr S$. 

Using this definition, we establish a notion of \emph{simulation}
between programs with or without stack.
Given an $F$-amplifier $\prog P$ with set of counters $\vr X \cup \vr S$,
we say that a counter program with a stack $\prog Q$
\emph{simulates} $\prog P$
if it satisfies the two following conditions:
\begin{itemize}
\item 
For every $A,B \in \Npos$ such that $A$ is divisible by $4^{(F(B)-B)}$
there exists a run of $\prog Q$ between two configurations $x$ and $y$
satisfying $\vals{\vr X,\vr S}(x) = \triple {A} {B} {\vr a} {\vr b'} {\vr c'} {\vr X}$
and $\vals{\vr X,\vr S}(y) =
\triple {A\cdot 4^{(B-F(B))}} {F(B)} {\vr a} {\vr b} {\vr c} {\vr X}$.
\item 
For every run of $\prog Q$ between two configurations $x$ and $y$
there exists a run of $\prog P$ between $\vals{\vr X,\vr S}(x)$
and $\vals{\vr X,\vr S}(y)$;
\end{itemize}
We say that such a program with stack $\prog Q$ is an $F$-amplifier.

The rest of this section is devoted to the proof of Lemma~\ref{lem:amppushdown}.
We rely on the constructions of Section~\ref{sec:ampl},
and similarly proceed in two steps.
First, we transform the $\F 1$-amplifier $\prog P_1$
into a program with stack $\prog Q_1$
that simulates $\prog P_1$ with four counters,
as it delegates the two other counters to the stack (Lemma~\ref{lem:Q1}).
Next, we adapt the constructions used to lift $F$-amplifiers
into $\successor{{\,F\,}}$-amplifier.
This time, we will have \emph{two} constructions that can be applied alternatively:
the first introduces one counter and one delegated counter,
and the second introduces two delegated counters (Lemma~\ref{lemma:progQ}).
Therefore, for every $d \in \mathbb{N}$,
starting with the program $\prog Q_1$
and applying alternatively our two lifting constructions
yields a $\F {d}$-amplifier with $\lfloor \frac d 2 \rfloor + 4$ counters
(as the other counters are delegated to the stack),
which proves Lemma~\ref{lem:amppushdown}.

\subsection{Construction of the \texorpdfstring{\boldmath $\F 1$}{F\_\{\#1\} 1}-amplifier \texorpdfstring{\boldmath $\prog Q_1$}{\#1 Q\_1}}
Consider the following program with 4 counters $\vr X = \set{\vr a, \vr b, \vr c, \vr t}$
and the stack alphabet $\vr S = \set{\vr b', \vr c'}$,
which is obtained from the program $\prog P_1$ defined in Section \ref{sec:ampl} by replacing each decrement
on $\vr b'$ and $\vr c'$ by the corresponding pop operation:

\medskip

\PROG{0.9}{Program $\prog Q_1$}{prog:Q1}{
\LLoop{$\vr a \tran t$}
\label{l2:startini}
\LLoop{$\vr t \tran a$  \quad \pop{\vr c'c'c'} \quad \label{l2:dec1}
\add{\vr c}{3}}\label{l2:pop0}
\State \pop{\vr b'}  \quad \add{\vr b} {1} \label{l2:pop1}
\label{l2:endini} 
\Loop 
\label{l2:startit} 
\LLoop{$\vr a \tran t$ \quad \pop{\vr c'} \quad \add{\vr t}{1}}
\LLoop{$\vr c \tran a$ \quad \pop{\vr c'} \quad \add{\vr a}{1}\label{l2:dec2}}\label{l2:pop2}
\LLoop{$\vr a \tran c$ \quad \pop{\vr c'} \quad \add{\vr c}{1}}
\LLoop{$\vr t \overset{8}{\tran} c$  \quad \pop{\vr c'c'c'c'c'c'c'c'} \quad \add{\vr a}{1} \quad \add{\vr c}{7}\label{l2:dec3}}
\State \pop{\vr b'} \quad \add{\vr b} {2} \label{l2:pop3}
\EndLoop\label{l:endit} 
}

\medskip

\begin{lemma}\label{lem:Q1}
The program $\prog Q_1$ simulates the $\F 1$-amplifier
$(\prog P_1,\vr X,(a,b,c),(a,b',c'),\vr t,\{c''\})$.
\end{lemma}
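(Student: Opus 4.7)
The plan rests on the observation that $\prog Q_1$ differs from $\prog P_1$ only by the uniform substitution of every $\vr b'$- (resp.\ $\vr c'$-)decrement by the corresponding \pop{\vr b'} (resp.\ \pop{\vr c'}) operation, and that $\prog Q_1$ contains no push command. Consequently, under $\vals{\vr X, \vr S}$, the stack counts of $\vr b'$ and $\vr c'$ evolve exactly as their counter values in $\prog P_1$ throughout any common command sequence. The two simulation clauses then have to be treated separately.

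For the second clause, given a run $\rho$ of $\prog Q_1$ from $x$ to $y$, I execute the identical sequence of commands in $\prog P_1$, replacing each \pop{\vr s} by the decrement \dec{\vr s}. A successful pop implies the popped symbol is on top of the stack, so in particular its count is positive, and the corresponding decrement is executable. A line-by-line check shows that the induced run of $\prog P_1$ goes from $\vals{\vr X, \vr S}(x)$ to $\vals{\vr X, \vr S}(y)$.

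The first clause is the main obstacle. Given $A, B \in \Npos$ with $4^{\F 1(B) - B}$ dividing $A$, Lemma~\ref{lem:P1} supplies the unique $\{\vr c'\}$-zeroing run $\pi$ of $\prog P_1$ from $\triple A B {\vr a} {\vr b'} {\vr c'} {\vr X}$ to $\triple {A \cdot 4^{B - \F 1(B)}} {\F 1(B)} {\vr a} {\vr b} {\vr c} {\vr X}$, and by Claims~\ref{claim:P1} and~\ref{claim:P2} this run decomposes into the initialisation step followed by exactly $B-1$ iteration steps. The plan is to replay $\pi$ in $\prog Q_1$ after seeding the stack appropriately. The structural key is that, by direct inspection of $\prog Q_1$, every pop of $\vr c'$ inside a given macro step (initialisation or iteration) is issued strictly before the unique trailing \pop{\vr b'} of that step. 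Writing $k_i$ for the number of $\vr c'$-decrements performed by $\pi$ during its $(i+1)$-th macro step, I take the initial stack content (top to bottom) to be
\[
(\vr c')^{k_0}\, \vr b'\, (\vr c')^{k_1}\, \vr b'\, \cdots\, (\vr c')^{k_{B-1}}\, \vr b'.
\]
By construction, this layout exactly matches the sequence of pops issued while replaying $\pi$, so every pop succeeds and the final stack is empty. The initial stack contains $B$ occurrences of $\vr b'$ and $\sum_{i} k_i = A(4^B - 1)$ occurrences of $\vr c'$ (the latter because $\pi$ is $\{\vr c'\}$-zeroing from the given triple), so $\vals{\vr X, \vr S}$ maps the initial configuration to $\triple A B {\vr a} {\vr b'} {\vr c'} {\vr X}$ and, using Lemma~\ref{lem:P1} to track the non-delegated counters, the final configuration to $\triple {A \cdot 4^{B - \F 1(B)}} {\F 1(B)} {\vr a} {\vr b} {\vr c} {\vr X}$, as required.
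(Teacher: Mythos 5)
Your proposal is correct and follows essentially the same route as the paper: the second simulation clause via the observation that $\prog Q_1$ is a constrained version of $\prog P_1$ (pops imply executable decrements), and the first clause by seeding the stack with the sequence of $\vr b'$/$\vr c'$ decrements of the unique zeroing run of $\prog P_1$, read in pop order. Your explicit block layout $(\vr c')^{k_0}\vr b'\cdots(\vr c')^{k_{B-1}}\vr b'$ is just a more concrete description of the paper's ``reverse of the decrement word $u_\pi$''.
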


\begin{proof}
Let $\vals{}$ denote the function $\vals{\{\vr a,b,c,t\}, \{\vr b',c'\}}$
that transforms configurations of $\prog Q_1$ into configurations of $\prog P_1$.
The program $\prog Q_1$ is a constrained version of $\prog P_1$:
every line is identical with the added restriction that lines~\ref{l2:pop0},~\ref{l2:pop1},~\ref{l2:pop2} and~\ref{l2:pop3}
can only be fired if the appropriate symbol is at the top of the stack.
Therefore, we immediately get the second condition required
for $\prog Q_1$ to simulate $\prog P_1$:
for every run $\pi$ of $\prog Q_1$ between two configurations $x$ and $y$,
the run $\pi'$ of $\prog P$
that starts in $\vals{}(x)$ and uses the same lines as $\pi$
is a valid run of $\prog P$ that ends in $\vals{}(y)$.

To conclude, we show that we can transform
the $\{\vr b', c'\}$-zeroing runs of $\prog P_1$
(thus in particular the runs that witness the $\F 1$-amplifier behaviour)
into runs of $\prog Q_1$.
To do so we rely on the fact 
that the counters $\vr b'$ and $\vr c'$
are only decreasing along the runs of $\prog P_1$.
Formally, given a $\{\vr b', c'\}$-zeroing run $\pi$ of $\prog P_1$
between two configurations $x$ and $y$,
let $u_\pi \in \{\vr b', c'\}^*$
be the word listing, in order, the occurrences
of the decrements of $\vr b'$ and $\vr c'$ along $\pi$.
We define a configuration $x'$ of $\prog Q_1$ as follows:
the counters $\vr a, b, c, t$ match the content they have
in the starting configuration $x$ of $\pi$,
and the stack content is the reverse of the word $u_\pi$ 
so that the first letter of $u_{\pi}$ is at the top of the stack.
This definition guarantees that:
\begin{itemize}
\item We have $\vals{}(x') = x$.
Indeed, since $\pi$ is $\{\vr b', c'\}$-zeroing,
the value of the counters $\vr b'$ and $\vr c'$ in the initial configuration $x$
is equal to the number of times these counters are decremented;
\item There exists a run $\pi'$ of $\prog Q_1$ that starts from $x'$
and follows the same lines as $\pi$:
whenever a popping instruction appears
the adequate symbol will be at the top of the stack.
As the lines of $\prog P_1$ and $\prog Q_1$ are analogous,
the configuration $y'$ reached by $\pi'$ satisfies
$\vals{}(y') = y$.\qedhere
\end{itemize}
\end{proof}

\subsection{Construction of the \texorpdfstring{\boldmath $\successor{{\,F\,}}$}{F \textasciitilde}-amplifiers \texorpdfstring{\boldmath $\successor{{\, \prog Q\,}}$}{\#1 Q \textasciitilde} and \texorpdfstring{\boldmath $\overline{{\, \prog Q\,}}$}{overline\{\{\#1 Q\}\}} from an \texorpdfstring{\boldmath $F$}{F}-amplifier \texorpdfstring{\boldmath $\prog Q$}{\#1 Q}}
In Section~\ref{sec:ampl},
we showed how to lift an $F$-amplifier $\prog P$ into an
$\successor{{\,F\,}}$-amplifier $\successor{{\,\prog P\,}}$.
We now show two different manners of adapting the construction 
of $\successor{{\,\prog P\,}}$ in order to
lift a program with stack $\prog Q$ simulating $\prog P$
into a program with stack simulating $\successor{{\,\prog P\,}}$.

\begin{restatable}{lemma}{progQ}\label{lemma:progQ}\sloppy
Let $\prog Q$ be a program simulating a strong $F$-amplifier
$(\prog P,\vr X,(a,b,c),(a,b',c'),t,Z)$
without delegating the counters $\vr a, b, c$ and $\vr t$.
\begin{itemize}
\item 
If $\prog Q$ delegates $\vr b'$ but not $\vr c'$,
then $\successor{{\,\prog Q\,}}$ simulates
$(\successor{{\,\prog P\,}},\vr X \cup \{\vr{b''}, \vr{c''}\},(a,b,c),(a,b'',c''),Z \cup \{c''\})$
while delegating two input counters $\vr b''$ and $\vr c''$
in addition to the counters delegated by $\prog Q$.
\item
If $\prog Q$ delegates $\vr b'$ and $\vr c'$,
then $\overline{{\,\prog Q\,}}$ simulates $(\successor{{\,\prog P\,}},\vr X \cup \{\vr{b''}, \vr{c''}\},(a,b,c),(a,b'',c''),Z \cup \{c''\})$
while delegating only one input counter $\vr b''$
in addition to the counters delegated by $\prog Q$.
\end{itemize}
\end{restatable}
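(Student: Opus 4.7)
The plan is to construct $\successor{{\,\prog Q\,}}$ and $\overline{{\,\prog Q\,}}$ by mimicking the construction of $\successor{{\,\prog P\,}}$ from Section~\ref{sec:ampl}, replacing each counter operation on a delegated counter by the corresponding stack operation: a pop for a decrement, a push for an increment. The key observation that enables this is that in $\successor{{\,\prog P\,}}$ the new input counters $\vr b''$ and $\vr c''$ are \emph{only decremented}, so their access pattern is naturally LIFO and amenable to stack delegation, while the counters $\vr b', \vr c'$ are first completely rewritten (in the rewritten lines~\ref{l:begcn} and~\ref{l:begbn}) and then drained by the call to $\prog Q$, which is again a LIFO pattern on the stack.

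For $\successor{{\,\prog Q\,}}$ (Case~1), since $\prog Q$ delegates $\vr b'$ but not $\vr c'$, I would delegate both $\vr b''$ and $\vr c''$: each decrement $\sub{\vr{b''}}{1}$ becomes $\pop{\vr b''}$, each decrement $\sub{\vr{c''}}{3}$ becomes three consecutive $\pop{\vr c''}$, and the move ``$\vr b \tran \vr b'$'' of line~\ref{l:begbn} is replaced by a loop that decrements $\vr b$ while pushing $\vr b'$. All operations on $\vr c'$ stay as counter operations, so during a call to $\prog Q$ only $\vr b'$ symbols sit above the remaining $\vr b''$ and $\vr c''$ symbols on the stack, and no conflict can occur. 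For $\overline{{\,\prog Q\,}}$ (Case~2) the counter $\vr c''$ is kept as a counter but $\vr c'$ is also delegated, so the increments of $\vr c'$ in line~\ref{l:begcn} now push $\vr c'$ alongside the $\vr b'$ symbols pushed by the rewritten line~\ref{l:begbn}. To obtain enough flexibility, I would fuse the rewritten lines~\ref{l:begcn} and~\ref{l:begbn} into a single nondeterministic loop that at each iteration chooses either one step of the rewritten line~\ref{l:begcn} or one step of the rewritten line~\ref{l:begbn}; since the counter effects of the two alternatives commute, the counter-side behaviour is unchanged, but any interleaving of $\vr c'$ and $\vr b'$ pushes becomes reachable on the stack.

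Verification of simulation follows the template of Lemma~\ref{lem:Q1}. The forward direction is immediate because each new program is a restriction of $\successor{{\,\prog P\,}}$ (a pop succeeds only when the matching symbol is on top), so every run of the stack program projects step-by-step to a valid run of $\successor{{\,\prog P\,}}$ via $\vals{}$. For the backward direction I would inspect the good runs of $\successor{{\,\prog P\,}}$ identified in the proof of Lemma~\ref{lem:P}: along such a run each delegated counter follows a LIFO discipline, so from an appropriately precomputed initial stack the stack program can follow exactly the same sequence of line executions. The hard part will be the stack-ordering synchronisation in Case~2: the nondeterministic push schedule produced by the fused loop must agree, symbol by symbol, with the pop schedule performed by the inner run of $\prog Q$. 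This should follow from the hypothesis that $\prog Q$ does not delegate $\vr a, \vr b, \vr c, \vr t$, so that only its $\vr b', \vr c'$-accesses constrain the stack and there is enough freedom to realise the required interleaving; this is precisely the step where extending the construction to delegate additionally $\vr c$ or $\vr b$ would break down, explaining why exactly two delegated counters are added in alternation.
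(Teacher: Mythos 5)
Your construction and verification strategy coincide with the paper's: replace operations on delegated counters by pushes/pops, get the forward direction because the stack programs project line-by-line (up to local reshuffling of counter effects) to runs of $\successor{{\,\prog P\,}}$, and realise the backward direction only for the good runs by precomputing the initial stack content and, for $\overline{{\,\prog Q\,}}$, by fusing the $\vr b'$- and $\vr c'$-moving loops into a single nondeterministic interleaving loop so that the exact shuffle demanded by the inner call to $\prog Q$ can be pushed. The one detail to tighten is the granularity of that fused loop: ``one step of the rewritten line~\ref{l:begcn}'' pushes four copies of $\vr c'$ atomically, whereas the required word $u_{\pi_2}$ may need $\vr b'$ symbols between individual $\vr c'$ symbols, so the interleaving must be permitted at single-push granularity, exactly as in the paper's lines~\ref{l:bego}--\ref{l:endo}.
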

\hspace{-1em}
\begin{minipage}[t]{0.49\textwidth}
\strut\vspace*{-\baselineskip}

\PROG{0.95}{Program $\successor{{\,\prog Q\,}}$}{prog:Q}{
\LLoop{$\vr a \tran t$}\label{l:startinit} 
\LLoop{$\vr t \tran a$ \quad \pop{\vr{c''}\vr{c''}\vr{c''}} \quad \add{\vr c}{3}}\State \pop{\vr{b''}} \quad \add{\vr b} {1}\label{l:endinit}    \Loop\label{l:startmain} 
\LLoop{$\vr a \tran t$}\label{l:it1}
\LLoop{$\vr t \tran a$ \ \pop{\vr{c''}\vr{c''}\vr{c''}} \ \add{\vr a}{3}}\label{l:end3loop}\label{l:it2}
\LLoop{$\vr c \tran c'$ \ \pop{\vr{c''}\vr{c''}\vr{c''}} \ \add{\vr c'}{3}}\label{l:begc}\label{l:endc}\label{l:it3}
\LLoop{\dec{\vr b} \quad \push{\vr b'}}\label{l:begb}\label{l:endb}\label{l:it4}
\State $\prog Q$ 
\State \pop{\vr{b''}} 
\EndLoop\label{l:endmain} 
}
\end{minipage}\begin{minipage}[t]{0.49\textwidth}
\strut\vspace*{-\baselineskip}

\PROG{0.95}{Program $\overline{{\,\prog Q\,}}$}{prog:Qbis}{
\LLoop{$\vr a \tran t$}
\LLoop{$\vr t \tran a$  \quad \sub{\vr{c''}}{3} \quad \add{\vr c}{3}}
\State \pop{\vr{b''}}  \quad \add{\vr b} {1}   \Loop 
\LLoop{$\vr a \tran t$}
\LLoop{$\vr t \tran a$\quad \sub{\vr{c''}}{3} \quad \add{\vr a}{3}}
\Loop \label{l:bego}
\LLoop{\dec{\vr b} \quad \push{\vr b'} }
\State  \dec{\vr c} \quad \sub{\vr{c''}}{3} \quad \push{\vr c'} 
\LLoop{\dec{\vr b} \quad \push{\vr b'}}
\State \push{\vr c'}  
\LLoop{\dec{\vr b} \quad \push{\vr b'}}
\State \push{\vr c'}  
\LLoop{\dec{\vr b} \quad \push{\vr b'}}
\State \push{\vr c'}  
\LLoop{\dec{\vr b} \quad \push{\vr b'}}\label{l:endo}
\EndLoop 
\State $\prog Q$ 
\State \pop{\vr{b''}} 
\EndLoop
}
\end{minipage}

\medskip

\begin{toappendix}
\section{Proof of Lemma~\ref{lemma:progQ}}\label{appendix:progQ}

Let $\vr X$ and $\vr S$ denote the set of counters of $\prog Q$, respectively its stack alphabet.
Let $\successor{{\,\vals{}\,}}$ denote the function $\vals{\vr X, \vr S \cup \{\vr b'',c''\}}$
that transforms configurations of $\successor{{\,\prog Q\,}}$ into configurations of $\successor{{\,\prog P\,}}$.
Similarly, let $\overline{{\,\vals{}\,}}$ denote the function $\vals{\vr X \cup \{\vr c''\}, \vr S \cup \{\vr b''\}}$
that transforms configurations of $\overline{{\,\prog Q\,}}$ into configurations of $\successor{{\,\prog P\,}}$.
The proof is done in three steps.
First, we show that we can easily translate
the runs of $\successor{{\,\prog Q\,}}$ and $\overline{{\,\prog Q\,}}$
into runs of $\successor{{\,\prog P\,}}$ with matching source and target.
The harder part of the proof is to show the reciprocal statement:
we consider the good runs of $\successor{{\,\prog P\,}}$
described in Lemma~\ref{claim:goodAmp} 
and we show how to translate them,
first into runs of $\successor{{\,\prog Q\,}}$,
and finally into runs of $\overline{{\,\prog Q\,}}$.

\subparagraph*{Transforming runs of \texorpdfstring{\boldmath $\successor{{\,\prog Q\,}}$}{\#1 Q \textasciitilde} and \texorpdfstring{\boldmath $\overline{{\,\prog Q\,}}$}{overline\{\{\#1 Q\}\}} into runs of \texorpdfstring{\boldmath $\successor{{\,\prog P\,}}$}{\#1 P \textasciitilde}.}
The program $\successor{{\,\prog Q\,}}$ is a constrained version of $\successor{{\,\prog P\,}}$:
every line is identical except for the lines
with a popping instruction instead of a decrement, which are more restrictive
since the correct symbol needs to be at the top of the stack.
As a consequence, for every run $\successor{{\,\pi\,}}$
of $\successor{{\,\prog Q\,}}$
between two configurations $x$ and $y$
we immediately get a run $\pi$
of $\successor{{\,\prog P\,}}$
between $\successor{{\,\vals{}\,}}(x)$ and $\successor{{\,\vals{}\,}}(y)$
which uses the same lines as $\successor{{\,\pi\,}}$.

Now given a run of $\overline{{\,\pi\,}}$ of $\overline{{\,\prog Q\,}}$ between two configurations $x$ and $y$,
translating $\overline{{\,\pi\,}}$ into a run of $\successor{{\,\prog P\,}}$ is not as direct
since the lines \ref{l:bego}--\ref{l:endo} of $\overline{{\,\prog Q\,}}$
are not exactly analogous to the lines \ref{l:endcn}--\ref{l:endbn} of $\successor{{\,\prog P\,}}$.
However, as we explained in the paper,
using some local reshuffling
$\successor{{\,\prog P\,}}$ can reproduce any counter update corresponding to the lines
\ref{l:bego}--\ref{l:endo} of $\overline{{\,\prog Q\,}}$.
This allows us to transform the run $\overline{{\,\pi\,}}$
into a run of $\successor{{\,\prog P\,}}$
between $\overline{{\,\vals{}\,}}(x)$ and $\overline{{\,\vals{}\,}}(y)$.

\subparagraph*{Transforming runs of \texorpdfstring{\boldmath $\successor{{\,\prog P\,}}$}{\#1 P \textasciitilde} into runs of \texorpdfstring{\boldmath $\successor{{\,\prog Q\,}}$}{\#1 Q \textasciitilde}.}
Let us suppose that $\prog Q$ delegates the counter $\vr b'$ but not $\vr c'$,
and let $\pi$ be a \emph{good} run of $\successor{{\,\prog P\,}}$
as described in the proof of Lemma~\ref{claim:goodAmp}.
We denote by $x$ and $y$ the starting and ending configuration of $\pi$.
In order to transfer $\pi$ to $\successor{{\,\prog Q\,}}$,
we begin by creating an appropriate initial configuration $x'$
as in the proof of  Lemma~\ref{lem:Q1}.
Formally, let $u_\pi \in \{\vr b', c'\}^*$
be the word listing, in order, the occurrences
of the decrements of $\vr b''$ and $\vr c''$ along $\pi$.
We define the configuration $x'$ of $\successor{{\,\prog Q\,}}$ by setting
the values of the counters of $\vr X$ to the values they have in 
the starting configuration $x$ of $\pi$,
and setting the stack content to the reverse of the word $u_\pi$ 
(so that the first letter of $u_{\pi}$ is at the top of the stack).
This definition guarantees that $\vals{}(x') = x$.
To conclude, we need to argue that $\successor{{\,\prog Q\,}}$
can simulate $\pi$ starting from $x'$.
First, remark that the initialisation step is easily simulated since
the definition of the initial stack content guarantees that the appropriate symbol
is at the top of the stack whenever needed.
We now explain, step by step, how $\successor{{\,\prog Q\,}}$ simulates the iteration steps of $\pi$.
First, thanks to the definition of the initial stack content the loops on lines \ref{l:it1}--\ref{l:it3} can be iterated as in $\pi$.
Then, we also iterate line \ref{l:it4} as in $\pi$.
Remark that this disrupts the stack by adding a sequence of $\vr b'$ on top of it.
Next comes the call to $\prog P$, and since  $\pi$ is a good run
we know that this call is \emph{correct},
in the sense that it starts in
$\triple A B {\vr a} {\vr b'} {\vr c'} {\vr X}$ (Equation~\eqref{eq:y})
and ends in
$\triple {A\cdot 4^{B-\successor{{\,F\,}}(B)}} {\successor{{\,F\,}}(B)} {\vr a} {\vr b} {\vr c} {\vr X}$ (Equation~\eqref{eq:x}) for some $A,B \in \Npos$.
Therefore in $\successor{{\,\prog Q\,}}$ we can simulate this call to $\prog P$ by a call to $\prog Q$,
and since the value of $\vr b'$ is $0$ in the ending configuration
this implies that the call to $\prog Q$ will automatically pop all of the $\vr b'$ that were added on the stack.
Therefore we are back with a stack content that matches a prefix of our initial stack content,
and we can conclude the simulation of the iteration step by popping a single $\vr b''$ from the stack.

\subparagraph*{Transforming runs of \texorpdfstring{\boldmath $\successor{{\,\prog P\,}}$}{\#1 P \textasciitilde} into runs of \texorpdfstring{\boldmath $\overline{{\,\prog Q\,}}$}{overline\{\{\#1 Q\}\}}.}
Let us suppose that $\prog Q$ delegates both $\vr b'$ and $\vr c'$,
and let $\pi$ be a \emph{good} run of $\successor{{\,\prog P\,}}$,
as described in the proof of Lemma~\ref{claim:goodAmp}.
We denote by $x$ and $y$ the starting and ending configuration of $\pi$.
We show how to construct a run $\overline{\pi}$ of $\overline{{\,\prog Q\,}}$ that simulates $\pi$.
First, remark that we have a single possibility
for the starting configuration of $\overline{\pi}$:
In the starting configuration of $\pi$
only the values of $\vr a, \vr b''$ and $\vr c''$ are nonzero (Equation~\eqref{eq:w}),
and $\overline{{\,\prog Q\,}}$ only delegates $\vr b''$ among these three counters.
Therefore the initial stack content will just be a sequence of  $\vr b''$ of the appropriate length.
Then, simulating the initialisation step of $\pi$ is easy:
one $\vr b''$ is popped from the stack and the other counters
are updated as in $\pi$.

To conclude, we show how to simulate
the iteration steps visited by $\pi$.
Let $\pi_1 \pi_2 \pi_3$ be a subrun of $\pi$ corresponding to an iteration step
of $\successor{{\,\prog P\,}}$,
where $\pi_2$ stands for the call to the program $\prog P$.
As we showed in the proof of Lemma~\ref{claim:goodAmp},
every call to $\prog P$ along $\pi$ is \emph{correct},
in the sense that it starts in some configuration
$\triple A B {\vr a} {\vr b'} {\vr c'} {\vr X}$ (Equation~\eqref{eq:y})
and ends in
$\triple {A\cdot 4^{B-\successor{{\,F\,}}(B)}} {\successor{{\,F\,}}(B)} {\vr a} {\vr b} {\vr c} {\vr X}$ for some $A,B \in \Npos$ (Equation~\eqref{eq:x}).
As a consequence, since $\prog Q$ simulates $\prog P$,
there exists a run $\overline{\pi}_2$
of $\prog Q$ that simulates $\pi_2$,
but this run requires a starting stack content
corresponding to some specific shuffle $u_{\pi_2}$
of the word $(\vr b')^B(\vr c')^{A \cdot (4^B-1)}$.
Fortunately, as we explained in the paper,
the lines \ref{l:bego}--\ref{l:endo} of $\overline{{\,\prog Q\,}}$
allow to push any shuffle of $\vr b'$ and $\vr c'$ on the stack.
In particular, there exists a subrun $\overline{\pi}_1$ of $\overline{{\,\prog Q\,}}$
that simulates $\pi_1$ and pushes the word $u_{\pi_2}$ on the stack.
As a consequence, $\overline{\pi}_1\overline{\pi}_2$
simulates truthfully the subrun $\pi_1\pi_2$ with no impact on the stack:
$\overline{\pi}_1$ pushes $u_{\pi_2}$, which is then popped by $\overline{\pi}_2$.
Therefore, we can simulate the iteration step $\pi_1 \pi_2 \pi_3$
by starting with $\overline{\pi}_1\overline{\pi}_2$,
and then adding $\overline{\pi}_3$ which pops a single $\vr b$ from the stack
to simulate the decrement of $\vr b$ occurring in $\pi_3$.
\end{toappendix}
~\\
The proof of Lemma~\ref{lemma:progQ}
can be found in Appendix~\ref{appendix:progQ}.
To convey the intuition behind it we
analyse the differences between the two programs.
The main difference concerns the counters delegated to the stack:
If $\prog Q$ delegates only $\vr b'$, then the starting configurations for the calls to $\prog Q$ are easy to setup
as the stack simply contains a sequence of $\vr b'$.
Therefore $\prog Q$ can be lifted via $\successor{{\,\prog Q\,}}$
which delegates both $\vr b''$ and $\vr c''$.
However, if $\prog Q$ delegates both $\vr b'$ and $\vr c'$, then
the starting configurations required for the calls to $\prog Q$ are more complex:
the stack needs to contain the symbols $\vr b'$ and $\vr c'$ in a specific order.
This prevents us from delegating both $\vr b''$ and $\vr c''$ to the stack,
thus we need to lift $\prog Q$ via $\overline{{\,\prog Q\,}}$
which delegates only $\vr b''$
and keeps  $\vr c''$ as a standard counter.
A second difference between $\successor{{\,\prog Q\,}}$ and $\overline{{\,\prog Q\,}}$ concerns
the loops updating $\vr b'$ and $\vr c'$ in the iteration step.
To understand what is happening here, let us have a look at what happens when we replace 
the push and pop instructions by increments and decrements:

\hspace{-1em}
\begin{minipage}[t]{0.5\textwidth}
\strut\vspace*{-\baselineskip}

\PROGnoname{0.95}{prog:2loops}{
\LLoop{$\vr c \tran c'$ \quad \sub{\vr{c''}}{3} \quad \add{\vr c'}{3}}\label{l:c'1}
\LLoop{$\vr b \tran b'$}
}
\end{minipage}
\begin{minipage}[t]{0.47\textwidth}
\strut\vspace*{-\baselineskip}

\PROGnoname{0.95}{prog:b}{
\Loop 
\LLoop{$\vr b \tran b'$}
\State{\dec{\vr c} \quad \inc{\vr c'} \quad \sub{\vr{c''}}{3}}\label{l:c''1}
\LLoop{$\vr b \tran b'$}
\State{\inc{\vr c'}}
\LLoop{$\vr b \tran b'$}
\State{\inc{\vr c'}}
\LLoop{$\vr b \tran b'$}
\State{\inc{\vr c'}}
\LLoop{$\vr b \tran b'$}
\EndLoop 
}
\end{minipage}
\medskip
\\
While these two sequences of instructions are different, we can remark that their global effect
is identical, in the sense that every counter update realisable by the left one is also realisable by the right one,
and reciprocally.
However, if $\vr b'$ and $\vr c'$ are delegated
to the stack then the sequence of instruction on the right is more powerful,
as it performs the same number of increments of
$\vr b'$ and $\vr c'$, but \emph{in any order}, which allows to create many different stack contents.
This is required so that $\overline{{\,\prog Q\,}}$
can construct the stack contents needed to call $\prog Q$.

\bibliography{p035-Czerwinski}

\begin{thebibliography}{10}

\bibitem{DBLP:conf/fsttcs/AtigG11}
Mohamed~Faouzi Atig and Pierre Ganty.
\newblock {Approximating Petri Net Reachability Along Context-free Traces}.
\newblock In {\em Proceedings of {FSTTCS} 2011}, volume~13 of {\em LIPIcs},
  pages 152--163, 2011.

\bibitem{BlondinFGHM15}
Michael Blondin, Alain Finkel, Stefan G{\"{o}}ller, Christoph Haase, and Pierre
  McKenzie.
\newblock {Reachability in Two-Dimensional Vector Addition Systems with States
  Is {PSPACE}-Complete}.
\newblock In {\em Proceedings of {LICS} 2015}, pages 32--43, 2015.

\bibitem{DBLP:conf/stoc/CzerwinskiLLLM19}
Wojciech Czerwinski, Slawomir Lasota, Ranko Lazic, J{\'{e}}r{\^{o}}me Leroux,
  and Filip Mazowiecki.
\newblock {The reachability problem for Petri nets is not elementary}.
\newblock In {\em Proceedings of {STOC} 2019}, pages 24--33. {ACM}, 2019.

\bibitem{jacm}
Wojciech Czerwinski, Slawomir Lasota, Ranko Lazic, J{\'{e}}r{\^{o}}me Leroux,
  and Filip Mazowiecki.
\newblock {The Reachability Problem for {Petri} Nets Is Not Elementary}.
\newblock {\em Journal of the {ACM}}, 68(1):7:1--7:28, 2021.

\bibitem{DBLP:conf/focs/CzerwinskiO21}
Wojciech Czerwinski and Lukasz Orlikowski.
\newblock {Reachability in Vector Addition Systems is Ackermann-complete}.
\newblock In {\em Proceedings of {FOCS} 2021}, pages 1229--1240. {IEEE}, 2021.

\bibitem{DBLP:conf/lics/CzerwinskiO22}
Wojciech Czerwinski and Lukasz Orlikowski.
\newblock {Lower Bounds for the Reachability Problem in Fixed Dimensional
  VASSes}.
\newblock In {\em Proceedings of {LICS} 2022}, pages 40:1--40:12. {ACM}, 2022.

\bibitem{DBLP:journals/ipl/EnglertHLLLS21}
Matthias Englert, Piotr Hofman, Slawomir Lasota, Ranko Lazic,
  J{\'{e}}r{\^{o}}me Leroux, and Juliusz Straszynski.
\newblock A lower bound for the coverability problem in acyclic pushdown {VAS}.
\newblock {\em Inf. Process. Lett.}, 167:106079, 2021.

\bibitem{EnglertLT16}
Matthias Englert, Ranko Lazi\'c, and Patrick Totzke.
\newblock {Reachability in Two-Dimensional Unary Vector Addition Systems with
  States is {NL}-Complete}.
\newblock In {\em Proceedings of {LICS} 2016}, pages 477--484. {ACM}, 2016.

\bibitem{DBLP:conf/concur/HaaseKOW09}
Christoph Haase, Stephan Kreutzer, Jo{\"{e}}l Ouaknine, and James Worrell.
\newblock Reachability in succinct and parametric one-counter automata.
\newblock In {\em Proceeding of {CONCUR} 2009}, volume 5710, pages 369--383.
  Springer, 2009.

\bibitem{Las22}
Slawomir Lasota.
\newblock {Improved Ackermannian Lower Bound for the Petri Nets Reachability
  Problem}.
\newblock In {\em Proceedings of~{STACS} 2022}, volume 219 of {\em LIPIcs},
  pages 46:1--46:15. Schloss Dagstuhl - Leibniz-Zentrum f{\"{u}}r Informatik,
  2022.

\bibitem{DBLP:journals/corr/abs-2104-12695}
J{\'{e}}r{\^{o}}me Leroux.
\newblock The reachability problem for petri nets is not primitive recursive.
\newblock {\em CoRR}, abs/2104.12695, 2021.

\bibitem{DBLP:conf/focs/Leroux21}
J{\'{e}}r{\^{o}}me Leroux.
\newblock {The Reachability Problem for Petri Nets is Not Primitive Recursive}.
\newblock In {\em Proceedings of {FOCS} 2021}, pages 1241--1252. {IEEE}, 2021.

\bibitem{DBLP:conf/csl/LerouxPS14}
J{\'{e}}r{\^{o}}me Leroux, M.~Praveen, and Gr{\'{e}}goire Sutre.
\newblock {Hyper-Ackermannian bounds for pushdown vector addition systems}.
\newblock In {\em Proceedings of {CSL-LICS} 2014}, pages 63:1--63:10. {ACM},
  2014.

\bibitem{DBLP:conf/lics/LerouxS19}
J{\'{e}}r{\^{o}}me Leroux and Sylvain Schmitz.
\newblock {Reachability in Vector Addition Systems is Primitive-Recursive in
  Fixed Dimension}.
\newblock In {\em Proceedings of {LICS} 2019}, pages 1--13. {IEEE}, 2019.

\bibitem{DBLP:conf/icalp/LerouxST15}
J{\'{e}}r{\^{o}}me Leroux, Gr{\'{e}}goire Sutre, and Patrick Totzke.
\newblock {On the Coverability Problem for Pushdown Vector Addition Systems in
  One Dimension}.
\newblock In {\em Proceedings of {ICALP} 2015}, volume 9135 of {\em Lecture
  Notes in Computer Science}, pages 324--336. Springer, 2015.

\bibitem{Lipton76}
Richard~J. Lipton.
\newblock {The Reachability Problem Requires Exponential Space}.
\newblock Technical report, Yale University, 1976.

\bibitem{minsky_67}
Marvin~L. Minsky.
\newblock {\em Computation: Finite and Infinite Machines}.
\newblock Prentice-Hall Series in Automatic Computation. Prentice-Hall, 1967.

\bibitem{Schmitz16toct}
Sylvain Schmitz.
\newblock {Complexity Hierarchies beyond Elementary}.
\newblock {\em {TOCT}}, 8(1):3:1--3:36, 2016.

\end{thebibliography}

\end{document}